\newcolumntype{H}{>{\setbox0=\hbox\bgroup}c<{\egroup}@{}}
\newcommand{\myparagraph}[1]{\smallskip\noindent\textbf{#1.}}
\crefname{part}{\S}{\S\S}
\crefname{chapter}{\S}{\S\S}
\crefname{section}{\S}{\S\S}
\crefname{subsection}{\S}{\S\S}
\crefname{claim}{Claim}{Claims}
\crefname{remark}{Remark}{Remarks}
\newcommand{\pname}{\texorpdfstring{\ensuremath{\mathsf{CRATE}}}{CRATE}}
\begin{document}

\definecolor{chain}{rgb}{0.0, 0.0, 1.0} %
\definecolor{dag}{rgb}{0.54, 0.2, 0.14} %
\definecolor{comment}{gray}{0.5} %

\algblock{Variables}{EndVariables}
\algnewcommand\And{\textbf{and}}
\algnewcommand\Or{\textbf{or}}
\algnewcommand\Not{\textbf{not}}

\newcommand{\hash}{H}
\newcommand{\gen}{\mathcal{G}}
\newcommand{\prover}{\mathcal{P}}
\newcommand{\verifier}{\mathcal{V}}

\newcommand{\vsm}[1]{\mathsf{VSM}_{#1}}
\newcommand{\bc}[1]{bc_{#1}}
\newcommand{\rol}[1]{\mathsf{rol}_{#1}}

\newcommand{\protocol}{\Pi}

\newcommand{\pstat}{\mathsf{status}}

\newcommand{\state}[1]{\mathsf{ST_{#1}}}
\newcommand{\dst}[1]{h_{\mathsf{ST_{#1}}}}
\newcommand{\dstt}[1]{h_{\mathsf{ST_{#1}'}}}
\newcommand{\dstemp}[1]{h_{\mathsf{ST_{#1}}}^{\mathsf{temp}}}

\newcommand{\dsto}[1]{h_{\mathsf{ST_{#1}}}^{o}}
\newcommand{\dstto}[1]{h_{\mathsf{ST_{#1}'}}^{o}}

\newcommand{\nonces}{\mathsf{nonces}}
\newcommand{\tx}{\mathsf{tx}}
\newcommand{\roots}{\mathsf{roots}}

\newcommand{\addr}{\mathsf{addr}}
\newcommand{\cdata}{\mathsf{cdata}}

\newcommand{\contract}[1]{\mathcal{SC}_{#1}} %

\newcommand{\call}{\mathsf{call}}

\newcommand{\msg}{\mathsf{msg}}
\newcommand{\sender}{\mathsf{sender}}

\newcommand{\addrr}[1]{\mathsf{addr}_{#1}}
\newcommand{\cdataa}[1]{\mathsf{cdata}_{#1}}

\newcommand{\status}[1]{\mathsf{status_{#1}}}
\newcommand{\cp}[1]{\mathsf{cp_{#1}}}
\newcommand{\flag}[1]{\mathsf{flag_{#1}}}

\newcommand{\trigcalls}{\mathsf{trigCalls}}
\newcommand{\trigcall}{\mathsf{trigCall}}
\newcommand{\id}{\mathsf{id}}
\newcommand{\rolids}{\mathsf{rolIDs}}

\newcommand{\valproof}{\mathsf{vProof}}
\newcommand{\valpr}[1]{\mathsf{vProof}_{#1}}

\newcommand{\locevd}{\mathsf{locEvd}}
\newcommand{\pcevd}{\mathsf{pcEvd}}
\newcommand{\cevd}{\mathsf{cEvd}}
\newcommand{\abevd}{\mathsf{abEvd}}

\newcommand{\gsc}{\mathsf{GSC}}
\newcommand{\trig}{\mathsf{trigger}}
\newcommand{\act}{\mathsf{action}}

\newcommand{\ttree}{\mathsf{trigTree}}
\newcommand{\atree}{\mathsf{actTree}}
\newcommand{\troot}[1]{\mathsf{trigRt}_{\mathsf{#1}}}
\newcommand{\aroot}[1]{\mathsf{actRt}_{\mathsf{#1}}}

\newcommand{\bridgeproof}{\mathsf{bridgeProof}}
\newcommand{\attproofs}{\mathsf{attProofs}}

\newcommand{\leader}{\mathsf{leader}}
\newcommand{\vsmid}{\mathsf{id}}

\newcommand{\require}{\textbf{Require }}

\newcommand{\batch}[1]{\mathsf{batch}_{#1}}
\newcommand{\batches}[1]{\mathsf{batches}_{#1}}

\newcommand{\actexec}[1]{\mathsf{actionExec}_{#1}}
\newcommand{\atomexec}{\mathsf{AtomExec}}

\newcommand{\code}[1]{\mathsf{trace}_{#1}}
\newcommand{\desc}[1]{\mathsf{desc}_{#1}}
\newcommand{\descnext}[1]{\mathsf{desc_{next}}_{#1}}
\newcommand{\nul}{\mathsf{null}}

\newcommand{\idx}{\mathsf{index}}
\newcommand{\paired}{\mathsf{Paired}}
\newcommand{\free}{\mathsf{Free}}

\newcommand{\commit}{\mathsf{Commit}}
\newcommand{\abort}{\mathsf{Abort}}
\newcommand{\decisions}{\mathsf{decisions}}

\newcommand{\true}{\mathsf{True}}
\newcommand{\false}{\mathsf{False}}
\newcommand{\nott}{\mathsf{NOT}}

\newcommand{\insertt}{\mathsf{insert}}

\newcommand{\tnonce}[1]{\mathsf{tNonce_{#1}}}
\newcommand{\anonce}[1]{\mathsf{aNonce_{#1}}}
\newcommand{\xsender}[1]{\mathsf{xSender_{#1}}}
\newcommand{\msghash}{\mathsf{msgHash}}
\newcommand{\msgsender}{\mathsf{msg.sender}}

\newcommand{\enonce}[1]{\mathsf{entryNonce_{#1}}}
\newcommand{\snonce}[1]{\mathsf{sessionNonce_{#1}}}
\newcommand{\sactive}[1]{\mathsf{sessionActive_{#1}}}
\newcommand{\sid}[1]{\mathsf{sessionID_{#1}}}

\newcommand{\tcalled}[1]{\mathsf{triggerCalled_{#1}}}
\newcommand{\acalled}[1]{\mathsf{actionCalled_{#1}}}

\newcommand{\tdata}[1]{\mathsf{trigData)_{#1}}}
\newcommand{\tdatalist}[1]{\mathsf{trigDataList)_{#1}}}

\newcommand{\crt}{\mathsf{crt}}
\newcommand{\ccrt}{\mathsf{crt_{chain}}}
\newcommand{\dagcrt}{\mathsf{crt}_{DAG}}
\newcommand{\burn}{\mathsf{burn}}
\newcommand{\mint}{\mathsf{mint}}

\newcommand{\outt}[1]{\mathsf{out}_{#1}}

\newcommand{\ovsm}{\mathsf{oVSM}}
\newcommand{\rolset}{\mathcal{R}}

\newtheorem{theorem}{Theorem}
\newtheorem{definition}{Definition}
\newtheorem{remark}{Remark}
\newtheorem{lemma}{Lemma}
\newtheorem{corollary}{Corollary}
\newtheorem{proposition}{Proposition}
\newtheorem{claim}{Claim}
\newtheorem{observation}{Observation}
\newtheorem{fact}{Fact}
\newtheorem{assumption}{Assumption}

\definecolor{verylightgray}{rgb}{.97,.97,.97}

\lstdefinelanguage{Solidity}{
	keywords=[1]{anonymous, assembly, assert, balance, break, call, callcode, case, catch, class, constant, continue, constructor, contract, debugger, default, delegatecall, delete, do, else, emit, event, experimental, export, external, false, finally, for, function, gas, if, implements, import, in, indexed, instanceof, interface, internal, is, length, library, log0, log1, log2, log3, log4, memory, modifier, new, payable, pragma, private, protected, public, pure, push, require, return, returns, revert, selfdestruct, send, solidity, storage, struct, suicide, super, switch, then, this, throw, transfer, true, try, typeof, using, value, view, while, with, addmod, ecrecover, keccak256, mulmod, ripemd160, sha256, sha3}, %
	keywordstyle=[1]\color{blue}\bfseries,
	keywords=[2]{address, bool, byte, bytes, bytes1, bytes2, bytes3, bytes4, bytes5, bytes6, bytes7, bytes8, bytes9, bytes10, bytes11, bytes12, bytes13, bytes14, bytes15, bytes16, bytes17, bytes18, bytes19, bytes20, bytes21, bytes22, bytes23, bytes24, bytes25, bytes26, bytes27, bytes28, bytes29, bytes30, bytes31, bytes32, enum, int, int8, int16, int24, int32, int40, int48, int56, int64, int72, int80, int88, int96, int104, int112, int120, int128, int136, int144, int152, int160, int168, int176, int184, int192, int200, int208, int216, int224, int232, int240, int248, int256, mapping, string, uint, uint8, uint16, uint24, uint32, uint40, uint48, uint56, uint64, uint72, uint80, uint88, uint96, uint104, uint112, uint120, uint128, uint136, uint144, uint152, uint160, uint168, uint176, uint184, uint192, uint200, uint208, uint216, uint224, uint232, uint240, uint248, uint256, var, void, ether, finney, szabo, wei, days, hours, minutes, seconds, weeks, years},	%
	keywordstyle=[2]\color{teal}\bfseries,
	keywords=[3]{block, blockhash, coinbase, difficulty, gaslimit, number, timestamp, msg, data, gas, sender, sig, value, now, tx, gasprice, origin},	%
	keywordstyle=[3]\color{violet}\bfseries,
	identifierstyle=\color{black},
	sensitive=true,
	comment=[l]{//},
	morecomment=[s]{/*}{*/},
	commentstyle=\color{gray}\ttfamily,
	stringstyle=\color{red}\ttfamily,
	morestring=[b]',
	morestring=[b]"
}

\lstset{
	language=Solidity,
	backgroundcolor=\color{verylightgray},
	extendedchars=true,
	basicstyle=\footnotesize\ttfamily,
	showstringspaces=false,
	showspaces=false,
	numbers=none,
	numberstyle=\footnotesize,
	numbersep=9pt,
	tabsize=2,
	breaklines=true,
	showtabs=false,
	captionpos=b
}

\date{}

\title{\Large \bf \pname: Cross-Rollup Atomic Transaction Execution}

\author{
{\rm Ioannis Kaklamanis}\\
Yale University
\and
{\rm Fan Zhang}\\
Yale University
} %

\maketitle

\begin{abstract}
Blockchains have revolutionized decentralized applications, with composability enabling atomic, trustless interactions across smart contracts. However, layer 2 (L2) scalability solutions like rollups introduce fragmentation and hinder composability.
Current cross-chain protocols, including atomic swaps, bridges, and shared sequencers, lack the necessary coordination mechanisms or rely on trust assumptions, and are thus not sufficient to support full cross-rollup composability.
This paper presents $\pname$, a secure protocol for cross-rollup composability that ensures all-or-nothing and serializable execution of cross-rollup transactions (CRTs).
 $\pname$ supports rollups on distinct layer 1 (L1) chains, achieves finality in 4 rounds on L1, and only relies on the underlying L1s and the liveness of L2s. We introduce two formal models for CRTs, define atomicity within them, and formally prove the security of $\pname$. We also provide an implementation of $\pname$ along with a cross-rollup flash loan application; our experiments demonstrate that $\pname$ is practical in terms of gas usage on L1.
\end{abstract}

\section{Introduction}~\label{sec:intro}
Blockchains have revolutionized trust, democratization, and privacy for applications in finance and beyond. As of October 2024, blockchain systems and applications collectively control assets worth over 2.5 trillion dollars~\cite{forbes-crypto-cap-24}. A unique attribute that sets blockchain applications apart from conventional ones is {\em composability}, the ability to combine services from multiple smart contracts in a single atomic execution. Composability enables trustless, atomic interactions that lead to innovative applications, such as flash loans~\cite{qin-attacking-defi-flash-loans-21}. 

Blockchains like Ethereum~\cite{wood-ethereum-2014} have a ``layer 2''-centric~\cite{eth-org-layer-2} roadmap for scalability. Layer 2 (L2) chains are separate blockchains built on top of the layer 1 (L1) chain, inheriting some of its security features. A rollup, a specific type of L2 chain, handles transaction execution off-chain and then submits the transaction data back to the main Ethereum blockchain. Even though rollups are envisioned to host the execution of more and more transactions, the scalability provided by L2 chains is {\em at odds} with composability. 
L2 systems like rollups are isolated from each other and forgo the advantages~\cite{a16z-composability} of composability in a cross-rollup context.

The fragmentation barrier poses a distinct challenge: the necessity for various chains to interoperate (IO), i.e., securely communicate and coordinate. Several cross-chain interoperability (IO) protocols have surfaced to address this gap, primarily adopting the framework of atomic swaps and cross-chain bridges (see, e.g.,~\cite{augusto-sok-io-24} for a survey). These systems have found extensive use in practice. 
For instance, bridges~\cite{xiew-zkbridge-22, axelar, near-bridge} move more than $1$ billion USD worth of assets cross-chain per week within Decentralized Finance (DeFi) services alone, as of the time of writing~\cite{defillama-bridges}. 
Atomic swaps~\cite{bitcoin-wiki-htlc,herlihy-atomic-cc-swaps-18} are also widely deployed, with which two parties can exchange their assets on separate blockchains with an atomicity guarantee: either they both receive the other party’s assets, or neither does. 
However, while existing IO protocols enable services such as asset transfers, they cannot offer full cross-rollup {\em composability}, since they lack a coordination mechanism. 
Notably, they cannot support cross-rollup flash loans or other complex applications that involve executing multiple transactions across rollups atomically.

A few academic proposals~\cite{lu-atomic-cc-interactions-24, fal-tccsci-23, zakhary-ac3-20} achieve composability for transactions spanning multiple L1 chains. However, they tackle a fundamentally different problem, as they operate on L1; they further require complex state locking mechanisms and multiple rounds, thus prohibiting certain applications like cross-chain flash loans. 
Shared sequencer networks, such as Espresso, Astria, and Radius~\cite{espresso-docs, astria-docs, radius-docs}, enable certain types of cross-rollup transactions but crucially trust the sequencer for proper sequencing to guarantee atomicity.

\myparagraph{The problem}
We investigate protocols for executing cross-rollup transactions (CRTs) atomically, enabling users to execute a sequence of transactions across multiple rollups in an all-or-nothing manner. 
Besides the all-or-nothing feature, we also require \emph{serializability} (in particular, \emph{weak serializability}), i.e., that the execution preserves the order of transactions within the CRT. In contrast, \emph{full serializability} additionally requires that no other transaction can interleave between CRT transactions (see \cref{sec:discussion} for details). \done Serializability is needed by most applications, such as arbitrage consisting of carefully ordered buy-and-sell transactions. 
Candidate solutions should also minimize trust in third parties and be flexible by supporting rollups that reside on different L1s.

\myparagraph{Building block: Shared Validity Sequencing}
Shared Validity Sequencing (SVS)~\cite{shared-val-seq-23} proposes a trigger-action paradigm for a smart contract on one rollup to remotely invoke a smart contract method on another rollup. A \textit{smart contract}~\cite{eth-org-smart-contracts} is a program stored on a blockchain that gets executed once all its predetermined conditions are satisfied. A rollup is operated by an \emph{Executor} and is controlled by a Validator smart contract ($\vsm{}$) on L1. SVS relies on a shared Executor between two rollups and a special system contract ($\gsc$) residing on both rollups. $\gsc$ has two main functions,
$\trig$ (called by users) and $\act$ (called by the Executor), and two Merkle trees, $\ttree$ and $\atree$. 
The atomicity of CRTs is ensured by comparing the roots of the two Merkle trees, enforced by the Validator smart contract, avoiding trust in any third party. A more detailed description of SVS can be found in \cref{sec:overview-svs}.

The SVS solution has two major problems. If the user CRT has a length greater than $2$, a malicious Executor could reorder transactions such that the all-or-nothing check passes, yet serializability breaks. Restoring serializability while preserving expressiveness is a nontrivial challenge we tackle in this paper.
Moreover, SVS also assumes that rollups share the same Executor and reside on the same L1 chain, which lacks flexibility. A more flexible protocol would allow rollups to reside on separate chains and be operated by distinct Executors.

\myparagraph{Our approach}
We present our protocol, $\pname$: \textbf{C}ross-\textbf{R}ollup \textbf{A}tomic \textbf{T}ransaction \textbf{E}xecution, which extends the existing rollup architecture and addresses the two key problems found in~\cite{shared-val-seq-23}, thus enabling cross-rollup composability. 
$\pname$ begins with the user creating a single source transaction that encodes her CRT using the trigger-action framework and sends it to the Executor. In Phase 1, the Executor executes a transaction batch off-chain, and generates new state digests and validity proofs. $\pname$ supports serializability and more complex CRTs by modifying the GSC contract to handle multiple triggered actions and track CRT sessions with session nonces. In Phase 2, the Executor drives the two VSMs to complete a Two-Phase-Commit (2PC) protocol, in order to jointly agree on the new pair of state digests. First, one of the two VSMs is designated to be the \emph{leader} and pre-commits to the new state digest, followed by the non-leader VSM. Then the leader VSM commits to the digest only when presented with evidence that the other VSM has pre-committed; finally the non-leader VSM commits, too, after seeing evidence of the leader VSM's commit decision.

\parhead{Formal definition and proof}
The lack of formal definition and rigorous analysis in~\cite{shared-val-seq-23} leaves the flaws unnoticed.
Thus, we formally define cross-rollup transactions (CRTs) in two different programming models, and then define atomicity for CRT execution, providing -- to our knowledge -- the first formal treatment of cross-rollup composability. We use our formal framework to rigorously prove security for $\pname$.

\parhead{Implementation and evaluation.}
We implement $\pname$ on Ethereum networks using the Foundry~\cite{foundry-book} toolkit, focusing on the Executor, L2-based $\gsc$ contract, and L1-based Validator ($\vsm{}$) contract. For evaluation, our implementation using SNARK-based membership proofs incurs a $0.32-0.75\times$ increase in gas usage compared to Zksync Era~\cite{zksync-era}. We additionally implement and evaluate a modular cross-rollup flash loan application on top of $\pname$.
Regarding rollup finality, $\pname$ achieves finality in 4 rounds on L1 (i.e., $~52$ minutes on Ethereum). Existing zk-rollups submit new state roots to L1 at intervals in the order of \emph{hours}, since SNARK proof generation is the bottleneck. Thus, $\pname$'s latency does not affect existing rollups' finality and throughput, since the $\vsm{}$s were already not utilized during that time for any new state updates.

\subsection*{Our Contributions.}
\begin{itemize}[leftmargin=*]
    \item We analyze the state-of-the-art cross-rollup composability paradigm~\cite{shared-val-seq-23} and demonstrate a concrete attack that breaks its security for CRTs of length greater than $2$. 
    \item We provide two formal models for cross-rollup transactions (CRTs): the chain-CRT and DAG-CRT programming model. Within these models, we formally define atomicity for CRTs, providing the first such formal security treatment.
    \item We introduce $\pname$, a secure cross-rollup composability protocol that (1) only trusts the underlying L1s, the liveness of the L2s, and the liveness of a bridge between the L1s, and (2) achieves finality in 4 rounds on L1. We then formally prove the security of $\pname$.
    \item We implement $\pname$ along with an end-to-end cross-rollup flash loan. Our experiment shows that $\pname$ is practical in terms of L1 gas usage.
    \item We introduce extensions to $\pname$ which enable full serializability and support distinct Executors.
\end{itemize}

\section{Background}~\label{sec:background}

\myparagraph{Blockchains}
A blockchain is a distributed protocol where a group of nodes collectively maintains an ordered list of blocks. A block is a data structure that stores a header and a list of transactions, as well as important metadata, such as a pointer to the previous block. 

\myparagraph{Smart contracts}
Many blockchains support expressive programs called smart contracts; these are stateful programs whose state persists on the blockchain. Our solution relies heavily on smart contracts on both L1 and L2, so we introduce some commonly used notation. The keyword \textbf{require} is used to check whether a certain predicate is true; if not, then the EVM reverts execution of the transaction. A smart contract can emit \emph{events} using the \textbf{emit} keyword; events provide non-persistent logging functionality which can be used by upper-level applications.

\myparagraph{Merkle Tree} A Merkle tree~\cite{merkle-tree} is a dynamic data structure widely used to commit to an ordered sequence of values. Under the assumption of cryptographic hash functions, two Merkle trees are equal if and only if their roots are equal. Ethereum uses the more efficient variant of Merkle-Patricia Tries~\cite{mpt-trie} for storage. In this work we use Merkle trees to mean both regular Merkle trees and MPT tries.

\newcommand{\scstruct}[1]{\mathsf{SC}_{#1}}

\myparagraph{Rollups: Executor and VSM} We consider validity-proof-based rollups, widely known as zk-rollups~\cite{eth-org-zk-rollups}. Each rollup is operated by a single entity -- which we call Executor~\footnote{In practice, there are distinct entities (Sequencer, Executor, Prover, etc.) assuming different roles in the rollup operation.} -- and is controlled by a Validator smart contract ($\vsm{}$) on layer $1$. The Executor is responsible for receiving, ordering, and executing transactions submitted by rollup users. The Executor also creates a validity proof for the state transition, and submits both the new state digest and the proof to the $\vsm{}$ (at the same time). The $\vsm{}$ contract maintains a short digest $\dst{}$ of the current state $\mathsf{ST}$ of the rollup. The $\vsm{}$ contract exposes a function $\textproc{updateDigest}(\cdot)$, which advances the current state digest and can only be called by the Executor. This function verifies the validity proof before accepting the new state digest. Suppose $\dst{}$ is the current state digest stored by a rollup VSM, and let $\dstt{} \neq \dst{}$ be another state digest. We say that $\dstt{}$ is the \emph{immediate next accepted} digest after $\dst{}$ if $\dstt{}$ is the first digest submitted to and accepted by the VSM contract through the $\textproc{updateDigest}(\cdot)$ function. 

\myparagraph{Bridges}~\label{par:background-bridges}
$\pname$ relies on cross-chain bridges for communication between VSMs and can use any trustless bridge, such as zkBridge~\cite{xiew-zkbridge-22}. We abstract bridges as follows: a contract $\contract{1}$ on $\bc{1}$ can access state from $\contract{2}$ on $\bc{2}$ via a tuple $(\scstruct{2}, \attproofs)$, where $\scstruct{2}$ represents $\contract{2}$'s state, and $\attproofs$ provides membership proofs against a relayed $\bc{2}$ block header.

\myparagraph{Application: Flash Loan}
Flash loans~\cite{qin-attacking-defi-flash-loans-21} enable a user to borrow a loan and repay it in the same blockchain transaction; if the loan is not repaid in the same transaction, then the entire transaction will fail. They are a type of uncollateralized loans, available to all users regardless of the capital they own.

\myparagraph{Two-Phase Commit (2PC)}
Two-Phase Commit (2PC) is a protocol introduced in the setting of distributed databases~\cite{bernstein-concurrency-databases-1986} to ensure atomicity in transactions across multiple servers. It operates in two phases: the pre-commit phase, where the coordinator asks all participants if they can commit, and the commit phase, where the coordinator either commits or aborts the transaction based on responses. This protocol guarantees that either all participants commit the transaction or none do.

\myparagraph{Zero-Knowledge Arguments and SNARKs}
An argument system for an NP relation $R$ is a protocol between a computationally-bounded prover $\prover$ and a verifier $\verifier$. Through this protocol, $\verifier$ is convinced by $\prover$ that a witness $w$ exists such that $(x,w) \in R$ for some input $x$. There is also an algorithm $\gen$ which produces public parameters during the setup phase. A triple $(\gen, \prover, \verifier)$ is a zero-knowledge argument (ZKA) of knowledge for $R$ is it satisfies \emph{completeness}, \emph{knowledge soundness}, and \emph{zero knowledge}; we refer the reader to~\cite{groth-16, thaler-proofs-args-zk-22} for formal definitions of these properties. $(\gen, \prover, \verifier)$ is considered a \emph{succint} argument system if the total communication (proof size) between $\prover$ and $\verifier$ as well as the running time of $\verifier$ are bounded by $\mathsf{poly}(\lambda, |x|, \log|R|)$, where $\log|R|$ represents the size of the circuit computing $R$. For this paper, we require only a succinct non-interactive argument of knowledge (SNARK) that satisfies the first two properties and ensures succinctness.

\section{Technical Overview}~\label{sec:overview}

In this section, we describe our problem statement, review SVS~\cite{shared-val-seq-23}, and present a high-level overview of $\pname$.

\subsection{Problem Statement}~\label{sec:problem-statement}
We investigate protocols for the problem of executing cross-rollup transactions (CRTs) atomically. 
In this paper, we use ``rollups'' to refer to ``zk-rollups,'' systems that use SNARKs to prove the correctness of transaction execution.

\myparagraph{Assumptions, Threat Model, and Goals}
 In this work, we consider the case of $2$ rollups, which reside on distinct L1 chains. For ease of exposition, we assume the participating rollups are operated by the same (shared) Executor. Later (see~\cref{sec:discussion}), we explain how our protocol can serve for the case of rollups with \textbf{different Executors}. 
  We adopt the standard assumption that participating L1 chains are live and safe (informally, liveness means the blockchain keeps processing transactions, and safety means that the finalized blocks are consistent cross replicas; see, e.g.,~\cite{shiFoundations2020} for formal definitions.)
 We consider a computationally bounded adversary which can corrupt the Executor in order to break atomicity of users' cross-rollup transactions; he cannot, however, tamper with user-signed transactions. We assume the Executor maintains the rollup's liveness, i.e., he is always online, processing transactions on L2 and submitting state roots on L1.~\footnote{This is a standard assumption for existing rollups, which employ stake-based incentive mechanisms and replace misbehaving Executors.} 

 The goal is to allow a user to execute a sequence of transactions spanning multiple rollups in an atomic (i.e., all-or-nothing) manner while preserving the ordering of the transactions within each rollup.
 We aim to design a \textbf{C}ross-\textbf{R}ollup \textbf{A}tomic \textbf{T}ransactions (CRAT) protocol that \textbf{(a)} is as \textbf{secure} as existing zk-rollups, \textbf{(b)} is \textbf{efficient} in terms of layer 1 transactions, \textbf{(c)} places \textbf{minimal or zero trust} on any third-party, and \textbf{(d)} is \textbf{practical} in terms of on-chain (L1) gas usage.

\myparagraph{Defining Cross-Rollup Transaction (CRT)}
For the rest of this paper, we use the term ``action'' to mean a transaction executed on a single chain. Looking ahead, without formal definition and proof, promising ideas proposed in~\cite{shared-val-seq-23} are in fact insecure. 
Pinning down a formal model and proving security for $\pname$ is a non-trivial technical challenge. 
For the purposes of this overview only, we use the following simpler notations. An \emph{action} can be represented using the format specified by Ethereum Virtual Machine (EVM)~\cite{eth-evm}, i.e., 
$a = (\addr, \cdata)$, where $\addr$ is the smart contract address and $\cdata$ contains the function identifier and arguments.
Then a cross-rollup transaction (CRT) can be viewed as an ordered sequence of actions $ \crt = [a_i]_{i \in [n]}$.

\subsection{Building Block: Trigger-Action Paradigm}~\label{sec:overview-svs}
The state-of-the-art proposal is Shared Validity Sequencing (SVS)~\cite{shared-val-seq-23} and variants (e.g., \cite{espresso-circ-24}).
SVS aims to enable an action $a_1$ on $\rol{1}$ to ``trigger'' a call to action $a_2$ on another rollup $\rol{2}$ (and vice versa and potentially recursively) in an atomic fashion.
SVS assumes that both rollups reside on the same layer 1 and share the same Validator smart contract (which limits its practicality).
To achieve atomicity, the consistency between $a_1$'s input to $a_2$ (which is stored on $\rol{1}$) and $a_2$'s execution (which takes place on $\rol{2}$) must be verified by layer 1. SVS introduced a special layer 2 smart contract to facilitate efficient consistency checks using Merkle Trees.
Specifically, on all rollups $\rol{i}$, a ``general system contract'' $\gsc_i$ is deployed and trusted by all users. $\gsc_i$ exposes two main functions: $\trig(\cdot)$ and $\act(\cdot)$~\footnote{The $\gsc$ function $\act(\cdot)$ is not to be confused with the term ``action'' which represents a blockchain transaction.}. $\gsc$ stores two merkle trees, $\ttree_i$ and $\atree_i$, whose roots we denote by $\troot{i}$ and $\aroot{i}$, respectively.

\begin{figure}
    \centering
    \includegraphics[scale=0.5]{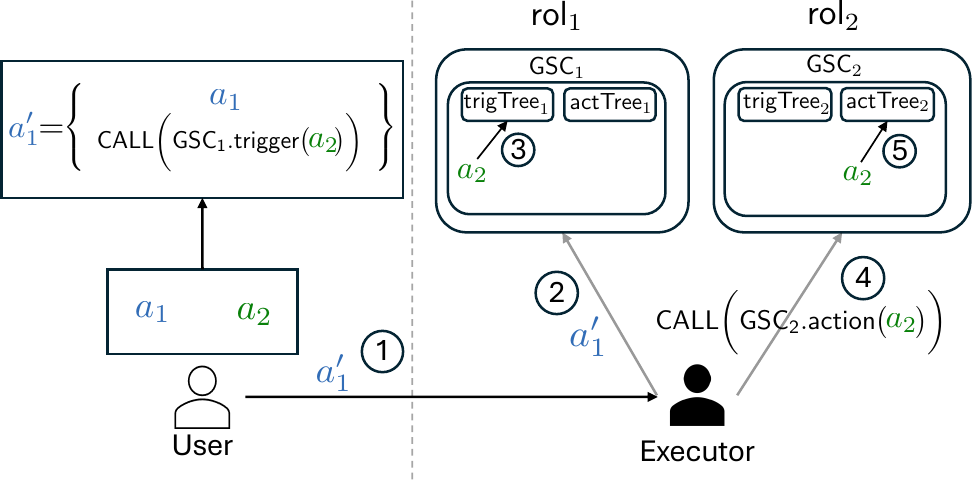}
    \caption{SVS~\cite{shared-val-seq-23} trigger-action paradigm. Here ``$\mathsf{CALL}(f)$'' denotes that the (smart contract) function $f$ will be executed.}
    \label{fig:orig-svs-trigger-action}
\end{figure}

\label{svs-crt-lifecycle} 
In~\cref{fig:orig-svs-trigger-action}, we illustrate the SVS workflow assuming a user who wishes to execute action $a_1 = (\addrr{1}, \cdataa{1})$ on $\rol{1}$ and $a_2 = (\addrr{2}, \cdataa{2})$ on $\rol{2}$ atomically \footnote{For instance, the pair $(a_1,a_2)$ could be a burn-mint token transfer.}.
She first ``bundles'' $(a_1,a_2)$ by creating an \emph{augmented} action $a_1'$ that executes $a_1$ and calls $\gsc_1.\trig(\addrr{2}, \cdataa{2})$ (step 1); then she sends $a_1'$ to the Executor, who in turn executes $a_1'$ on $\rol{1}$ (step 2), causing the tuple $(\addrr{2}, \cdataa{2})$ to be inserted into $\ttree_1$ (step 3). The Executor executes $\gsc_2.\act(\addrr{2}, \cdataa{2})$ on $\rol{2}$ (step 4), and as a result the same tuple $(\addrr{2}, \cdataa{2})$ is inserted into $\atree_2$ (step 5) and $a_2$ is executed.
Now atomicity boils down to a simple consistency check of whether $\troot{1} \overset{?}{=} \aroot{2}$ and $\troot{2} \overset{?}{=} \aroot{1}$, which the L1 $\vsm{}$ can efficiently check.

To see why atomicity holds, note that the execution of $a_1'$ implies (1) $a_1$ is executed, and (2) $(\addrr{2}, \cdataa{2})$ is inserted into $\ttree_1$. Either both (1) and (2) take effect or none does. Similarly, the execution of $\gsc_2.\act(\addrr{2}, \cdataa{2})$ implies (1') $a_2$ is executed on $\rol{2}$, and (2') the same triple $(\addrr{2}, \cdataa{2})$ is inserted into $\atree_2$. $\gsc_2$ ensures that both take place or none does.
Due to the collision resistance of Merkle Trees, the atomicity check above ensures that if (2) and (2') took place, so did (1) and (1').

\subsection{Technical Challenges and Our Solutions}~\label{sec:svs-challenges-and-solutions}

\begin{figure*}
    \centering
    \includegraphics[scale=0.5]{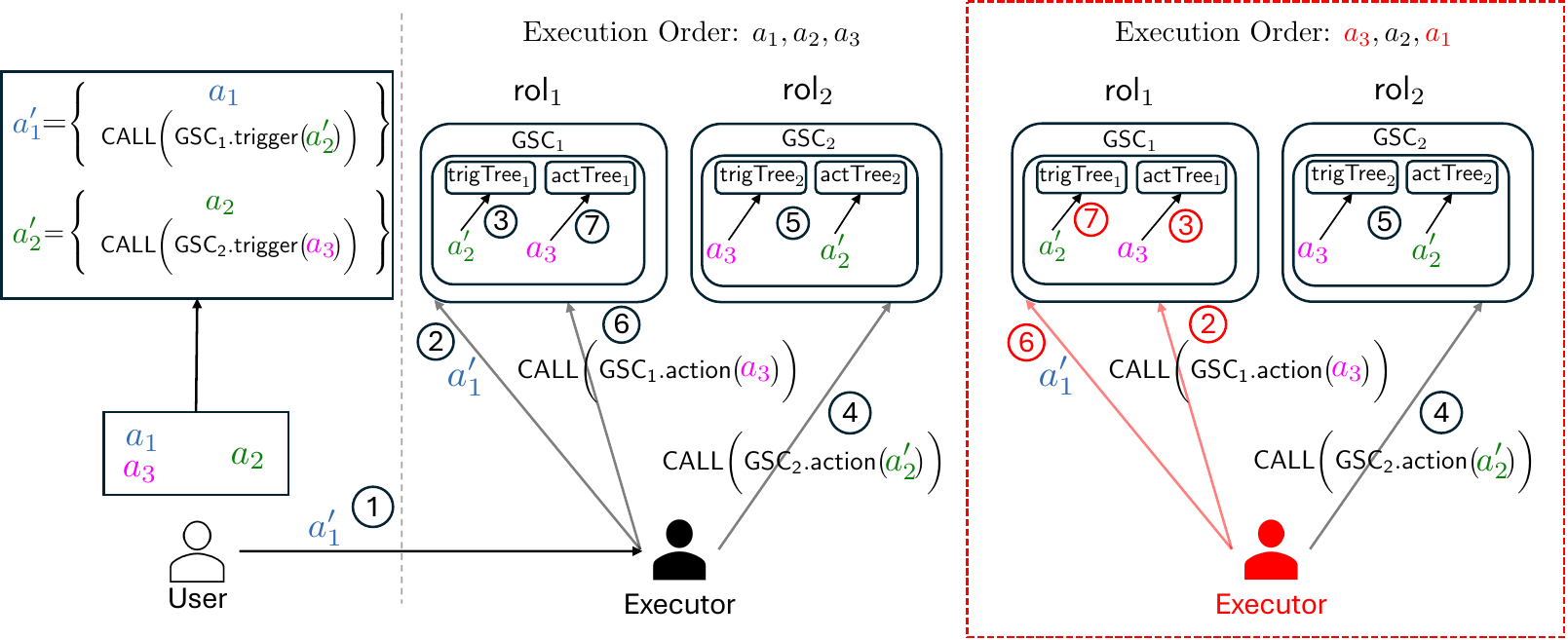}
    \caption{Serializability attack on SVS~\cite{shared-val-seq-23}. A malicious Executor (shown on the right in red) can execute $a_3$ before $a_1'$, flipping the order of steps (2,3) with steps (6,7). Since $a_1'$ only modifies $\ttree_1$ and $\gsc_1.\act(a_3)$ only modifies $\atree_1$, the two $\gsc$ contracts result in the same state as in the honest execution (shown in the middle), thus passing the consistency check.}
    \label{fig:svs-serializability-attack}
\end{figure*}

Although simple and elegant, SVS~\cite{shared-val-seq-23} has two major problems and fixing them requires careful modeling and protocol design. 
First, serializability -- and by extension, atomicity -- breaks down for CRTs of \emph{length greater than $2$}, as illustrated in \cref{fig:svs-serializability-attack}.
Consider $\crt := [a_1, a_2, a_3]$ of length $3$, where $a_1$ and $a_3$ are to be executed on rollup $1$ and $a_2$ on rollup $2$. A natural attempt to do this with SVS is to create augmented actions $(a_1', a_2')$ as above: $a_2'$ is the same as $a_2$ but also triggers $a_3$, and now $a_1'$ is the same as $a_1$ but also triggers $a_2'$ (step 1 in \cref{fig:svs-serializability-attack}).
The intention is that the atomic execution starts with $a_1'$ (steps 2,3), then $a_2'$ (steps 4,5) and $a_3$ (steps 6,7). 
However, a \textit{malicious} Executor can easily mount a re-ordering attack to break atomicity; in our example, he can execute $a_3$ before $a_1'$ on rollup $1$. 
Doing so still passes the consistency check as above and the validity proof verifications because each action was honestly executed. In practice, this can be a serious attack; for example, if $a_1$ and $a_3$ contain arbitrage logic, changing their execution order can cause the user to miss the arbitrage opportunity or end up worse off.

Second, SVS requires all rollups to share the same Executor and reside on the same chain, while many rollups do not satisfy these requirements. Moreover, the Executor is required to submit each new pair of state roots \emph{together} in the same L1 transaction, along with the extended validity proofs described above. Instead, CRAT protocols should ideally be \textit{flexible} and allow rollups to reside on \textit{distinct L1 chains}, as well as be operated by distinct, possibly distrusting Executors.
To overcome this challenge, the Two-Phase Commit (2PC) paradigm is a natural choice and indeed the one we adopt; however, due to the distinct and heterogeneous nature of the entities involved -- two VSM contracts and an untrusted Executor -- applying the 2PC framework tuns out to be highly nontrivial. 

\subsubsection{Our Solutions}
We present $\pname$, along with the careful programming model and formalism needed to tackle the major flaws not addressed by the SVS~\cite{shared-val-seq-23} architecture. 

\myparagraph{Restoring serializability}
We present our new design in two steps. 
In the first step, we fix serializability in the simpler \emph{chain CRT} programming model, in which each action is limited to trigger at most one other action. However, the chain CRT model lacks the necessary expressiveness for certain applications, such as flash loans. We concretely motivate the necessity of such higher expressiveness in~\cref{subsec:application-xfl}. In the second step, we propose a more flexible programming model and modify our serializability solution to support this model.

In more detail, 
we introduce the notion of {\em sessions}. We carefully modify $\gsc$ to track CRT instances by assigning a unique session nonce and maintaining an entry nonce and an active flag.  The source action $(a_1)$ starts a new CRT session by calling a new entry point function in $\gsc$, which increments the session and entry nonces and sets the flag. Subsequent $\act(\cdot)$ calls verify that the session nonce matches the current or next expected value, ensuring consistent use of the same nonce across actions. The VSM contract checks that the sum of the two entry nonces equals the session nonce.

\myparagraph{Supporting DAG CRTs}
We introduce the more expressive Directed Acyclic Graph (DAG) CRT programming model. In this model, each action can trigger multiple subsequent actions, which can in turn trigger additional actions, forming a directed acyclic graph (hence the name ``DAG'').  To integrate this intended functionality within our serializability solution, we represent a DAG CRT as a sequence of actions, similar to chain CRTs, but with the key distinction that each action includes multiple sub-actions to be executed, along with the ability to trigger additional sub-actions.
To support the more expressive DAG CRTs, we modify  $\gsc.\act(\cdot)$ to accept an \emph{array} of triggered actions, instead of a single action. Further, we ensure that all trigger calls and triggered actions within the same transaction share the same nonce value, we repurpose the entrypoint function to be the global entrypoint of $a_1$, all while preserving the previous session-tracking logic.

\myparagraph{Supporting distinct L1 chains}
While the above changes restore serializability, we draw on 2-Phase Commit (2PC) protocols from the distributed computing literature~\cite{bernstein-concurrency-databases-1986} to support rollups with $\vsm{}$s on distinct L1 chains.
The key is to ensure that two VSM contracts accept new state roots in an all-or-nothing fashion. Suppose the Executor is ready to submit a pair of state digests -- one to each VSM -- which were produced by at least one CRT. \done%
In the pre-commit round, one VSM is selected as the leader; it verifies and pre-commits to the submitted state digest and transitions to a paired status; the non-leader VSM then follows the same process and pre-commits only after the leader VSM. In the commit round, the two VSMs commit to the new state digest in the same order after verifying relevant evidence from the other side.

For ease of exposition, we present $\pname$ in the chain-CRT model in \cref{sec:xevm,sec:2pc-trustless}, and present modifications to support the DAG-CRT model in~\cref{sec:dag-crt-protocol}.

\subsection{Overview of \pname}

\begin{figure}
    \centering
    \includegraphics[scale=0.35]{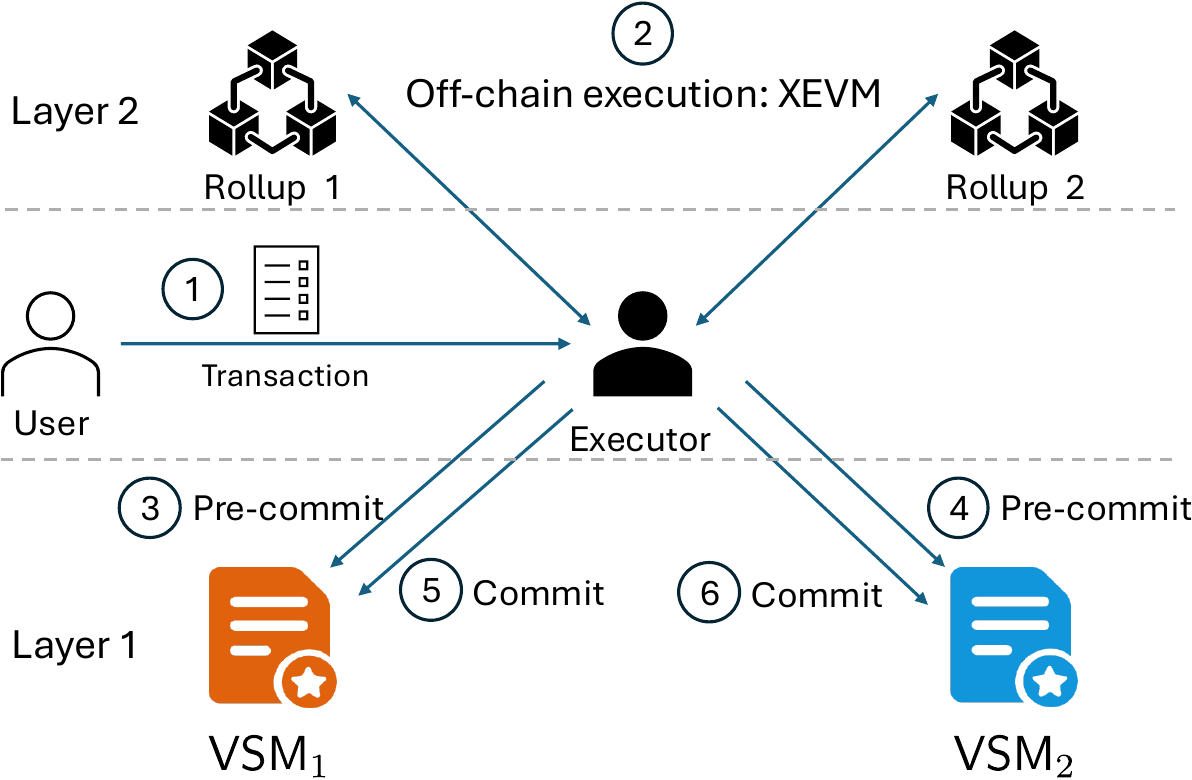}
    \caption{$\pname$ Overview}
    \label{fig:protocol-diagram}
\end{figure}

We describe the end-to-end workflow of $\pname$, illustrated in~\cref{fig:protocol-diagram}, starting from the user CRT and all the way until the pair of new state digests are accepted by the VSM contracts. The user starts by creating a single transaction which encodes her intended CRT by making the appropriate calls to $\gsc.\trig(\cdot)$, and sends her transaction to the Executor (step $1$). The Executor keeps executing received transactions off-chain (phase 1 below) and, at regular intervals, he submits transaction batches -- along with the state digest they produce -- to the VSM on L1 (phase 2 below).

\parhead{Phase 1.} The Executor executes the transaction batch off-chain following the \emph{\textbf{C}ross-EVM (XEVM)} sub-protocol of $\pname$. The off-chain execution yields a new pair of state digests, and the Executor also produces validity proofs, one per rollup. This phase corresponds to step $2$ in~\cref{fig:protocol-diagram} and addresses the first challenge outlined in~\cref{sec:svs-challenges-and-solutions}. 

\parhead{Phase 2.} The Executor drives the two VSMs to perform the \emph{2PC component} of $\pname$ (steps $3-6$ in~\cref{fig:protocol-diagram}). Each VSM can be in either ``free'' or ``paired'' status, and initially both are in free status.
During the \textit{pre-commit} round, one of the two VSMs is deterministically chosen as the leader; suppose it is $\vsm{1}$. The Executor first drives $\vsm{1}$ to pre-commit (step $3$) by submitting the new state digest, validity proof, and evidence that $\vsm{2}$ is in free status; 
$\vsm{1}$ in turn verifies the proof and the evidence, accepts the digest as temporary, and switches to paired status.
Next, the Executor drives $\vsm{2}$ to pre-commit (step $4$), and $\vsm{2}$ also switches to paired status after verifying that $\vsm{1}$ is in paired status.

During the \textit{commit} round, the Executor first drives $\vsm{1}$ to commit
(step $5$) by submitting evidence that (a) $\vsm{2}$ is in paired status
and (b) the roots of the $\gsc$ contracts match. $\vsm{1}$ in turn verifies the evidence, accepts its temporary state digest as final, and stores its ``commit'' decision under a unique identifier of this 2PC instance. Next, the Executor drives $\vsm{2}$ to commit (step $6$), and $\vsm{2}$ also accepts its temporary state digest as final after verifying that that $\vsm{1}$ has issued a ``commit'' decision.

\myparagraph{Abort case} During the second round of 2PC, $\vsm{1}$ (the leader) can also be driven to abort. After verifying relevant evidence that $\vsm{2}$ has not switched to paired status, $\vsm{1}$ stores an ``abort'' decision (under the 2PC identifier), discards its temporary state digest, and ``rolls back'' to the original state digest. Similarly to the commit case, $\vsm{2}$ will also abort (if ever pre-committed) upon verifying $\vsm{1}$'s abort decision.

\myparagraph{Incorrect Evidence} The workflow above describes an honest execution. If the Executor submits incorrect evidence to the $\vsm{}$ during pre-commit or commit, verification checks will fail, causing the $\vsm{}$ to revert the transaction. In the pre-commit round, the $\vsm{}$ would remain in free status, while in the commit round, it would remain locked in paired status until proper commit/abort evidence is provided. Allowing a $\vsm{}$ to unilaterally abort is insecure, as a malicious Executor could drive one $\vsm{}$ to commit and force the other to abort. This highlights the importance of a leader-based approach, where one $\vsm{}$ decides and the other follows, as well as the non-trivial nature of applying 2PC in this setting.

\begin{remark}[VSM Communication via Bridges]~\label{rem:bridge-assumption}
    When we say a VSM ``verifies evidence'' about the other VSM's state, we mean that it verifies a state membership proof against L1 block headers relayed by a trustless blockchain bridge, such as \cite{xiew-zkbridge-22}. We further assume that a VSM only considers bridge-relayed state digests that are final (2 epochs in Ethereum~\cite{buterin-gasper-2020}), eliminating the possibility of any reorg~\cite{eth-org-reorg}.
\end{remark}

\begin{remark}[Setup]~\label{rem:vsm-setup}
    It is crucial that each VSM be aware of its rollup's GSC contract and of the other VSM. After deployment, we assume a secure, one-time setup phase at the end of which, each VSM has hardcoded the address of its rollup's GSC contract as well as the address of the other VSM. Such setups are common practice and outside this paper's scope.
\end{remark}

\parhead{Application: Cross-Rollup Flash Loan}~\label{subsec:application-xfl}
A cross-rollup flash loan involves the following seven steps:
\noindent\textbf{1. Borrow} the desired amount from a flash loan pool (on $\rol{1}$).
\noindent\textbf{2. Burn} the amount (on $\rol{1}$).
\noindent\textbf{3. Mint} the amount (on $\rol{2}$).
\noindent\textbf{4. Arbitrage} (on $\rol{2}$). 
\noindent\textbf{5. Burn} the post-arbitrage amount (on $\rol{2}$).
\noindent\textbf{5. Mint} the post-arbitrage amount (on $\rol{1}$).
\noindent\textbf{7. Repay} the borrowed amount and retain the profit (on $\rol{1}$). Typically, a Token contract supports burn-and-mint, and a User contract contains the logic for the user’s actions (e.g., arbitrage).

We now discuss why cross-rollup flash loans highlight the need for the more expressive DAG CRT model. In the execution of action $a_1$, which includes steps 1 and 2 on $\rol{1}$, both step 3 (mint) and step 4 (arbitrage) must be triggered for execution on $\rol{2}$. In theory, steps 3 and 4 could be combined into a single action $a_2$, and having $a_1$ trigger $a_2$ would eliminate the need for multiple trigger calls. However, this approach presents a challenge in determining which contract will perform the bundling and trigger $a_2$. Specifically, step 3 relates to the Token contract, while step 4 is handled by the User contract. Consequently, the Token contract should (internally) trigger the mint operation upon burning, and the User contract should trigger the arbitrage, resulting in two trigger calls at the end of $a_2$. The Executor should then pick up both trigger calls, bundle them together, and pass them as arguments to single $\gsc.\act(\cdot)$ invocation. These practical desiderata underscore the necessity for a more flexible and expressive programming model, such as our DAG CRT model.

\section{Formal Foundations}~\label{sec:foundations}
\subsection{Safety}
The SVS trigger-action paradigm~\cite{shared-val-seq-23} serves as a \emph{programming model} for cross-rollup transactions, in addition to providing an elegant atomicity check mechanism. In the simple example where $a_1$ triggers $a_2$, causing the Executor to execute $\act{}(a_2)$, it is as if $a_1$ \emph{passes} control to $a_2$, which in turn is not expected to return to its caller. This programming model is akin -- in spirit -- to the \emph{continuation-passing style (cps)} model~\cite{sussman-scheme-1998, appel-continuations-2007}, as well as to the widely used asynchronous event driven architecture~\cite{jansen-ibm-event-driven-2020, rah-even-driven-microservices-2022, ghaemi-pubsub-blockchain-2021}. 
We first define the simpler chain-CRT model, which is similar to the CRT abstraction in~\cite{lu-atomic-cc-interactions-24}, and then present the more involved DAG-CRT model.

\parhead{Chain CRTs}~\label{sec:chain-crt-model}
We model an \emph{action} as a $3$-tuple $a = (\desc{}, \code{}, \descnext{})$. Here, $\desc{} = (\rol{}, \addr{}, \cdata{})$ is the action's \emph{description}; $\rol{}$ is the target rollup (L2)~\footnote{We assume the $\rol{}$ identifier also determines the underlying L1.}, $\addr$ is the smart contract address, and $\cdata$ contains the function identifier and arguments.
Next, $\code{}$ is the \emph{raw code trace} of the action, i.e., it represents the actual opcode sequence executed, \emph{without} the triggering logic. Finally, $\descnext{}$ is \emph{the description of another action} to be executed, and its format is identical to $\desc{}$. Although not explicit, we assume that $a.\desc{}$ uniquely determines both $a.\code{}$ and $a.\descnext{}$; this correspondence is enforced by EVM rules.~\footnote{Similarly, $(\addr{}, \cdata{})$ determines the EVM opcodes to be executed.} We will overload notation and use $a.\rol{}$ to mean $a.\desc{}.\rol{}$. The field $a.\descnext{}$ can be $\nul$, in which case it is ignored.

\done{This model assumes the original SVS~\cite{shared-val-seq-23} trigger-action paradigm, restricted to allow each transaction to trigger only a single other action.}

\begin{remark}
    Looking ahead, $\descnext{}$ is precisely the field inserted in $\ttree$, while $\desc{}$ is the field inserted in $\atree$.  
\end{remark}

We define a \emph{chain CRT} as a sequence $\ccrt := [a_i]_{i \in [n]}$ which satisfies $a_n.\descnext{} = \nul$ and for all $1 \leq i \leq n-1$, \; $a_i.\descnext{} = a_{i+1}.\desc{} \neq \nul$. Since each action in the sequence is modeled to trigger the next action, we notice that each $a_i$ fully determines the sub-CRT $\ccrt [i\colon \! ]$. This means that the entire CRT $\ccrt$ can be represented by the single source action $a_1$, which is exactly the action the user will submit to the Executor. Even though $a_1$ is a compact representation of $\ccrt$, we still represent CRTs by the full sequence of actions to facilitate analysis.

A transaction \emph{batch} is an ordered list of actions that the Executor submits along with the new state digest on L1. An action can be either \emph{local}, i.e., confined within the rollup, or \emph{non-local}, i.e., part of a CRT. 
We say a \emph{state digest} is \emph{local} if it has been produced by a batch consisting of only local actions. Otherwise we say the state digest is \emph{non-local}. We write $a_1 \underset{\batch{}}{\prec} a_2$ to mean that $a_1$ appears before $a_2$ within $\batch{}$.
We define the following predicate:
\begin{itemize}[leftmargin=*]
    \item $\actexec{}(\dst{}, \dstt{}, \batch{}, a) = \true$ if $a \in \batch{}$ \textbf{and} executing $\batch{}$ causes the state digest to transition from $\dst{}$ to $\dstt{}$.
\end{itemize}

We can now define the atomic execution predicate $\atomexec(\ccrt, \{\rol{j}, \dst{j}, \dstt{j}, \batch{j}\}_j)$, which covers an entire CRT instance $\ccrt := [a_i]_{i \in [n]}$ and is defined with respect to the set of rollups $\rolset = \cup_{i \in [n]} \{ a_i.\rol{} \}$ and their state digest pairs and batches.
This predicate evaluates to $\mathsf{True}$ if and only if (1) occurs or both (2a) and (2b) occur:
\begin{enumerate}[leftmargin=*]
    \item For all $j \in [|\rolset|]$ and for all $i \in [n]$ s.t. $a_i.\rol{} = \rol{j}$, $\actexec{}(\dst{j}, \dstt{j}, \batch{j}, a_i) = \false$.
    \item (a) For all $j \in [|\rolset|]$ and for all $i \in [n]$ s.t. $a_i.\rol{} = \rol{j}$,
            $\actexec{}(\dst{j}, \dstt{j}, \batch{j}, a_i) = \true$.
            
        (b) For all $j \in [|\rolset|]$ and for all $i_1, i_2 \in [n]$ s.t. $a_{i_1}.\rol{} = a_{i_2}.\rol{} = \rol{j}$ and $i_1 < i_2 $, it holds that $a_{i_1} \underset{\batch{}}{\prec} a_{i_2}$.
\end{enumerate}

\noindent We now proceed to formally define atomicity for chain CRTs.

\begin{definition}[CRT Atomicity]~\label{def:forward-atom-chain-crt}   
    Let $\ccrt := [a_i]_{i \in [n]}$ be a CRT, 
    and let $\mathcal{R} = \cup_{i \in [n]} \{ a_i.\rol{} \}$ be the set of associated rollups, for some $n \geq 2$. For each $j \in [|\mathcal{R}|]$, let $\dst{j}$ be the \emph{current} state digest of $\vsm{j}$. Also let $(\dstt{j}, \batch{j})$ be the (state digest, transaction batch) pair which is \emph{immediate next accepted} after running CRAT protocol $\protocol$. We say $\protocol$ satisfies \emph{atomicity} if 
    \begin{align*}
        \atomexec(\ccrt, \{\rol{j}, \dst{j}, \dstt{j}, \batch{j}\}_j) = \mathsf{True}.
    \end{align*}
\end{definition}

\parhead{DAG CRTs}~\label{sec:dag-crt-model}
Similar to chain CRTs, a DAG-CRT action is a tuple $a_i := (\desc{}, \code{}, \descnext{})$. Unlike chain-CRTs, $a.\desc{} = [sa_j.\desc{}]_j$ now is an array of \textbf{s}ub-\textbf{a}ction descriptions (hence ``$sa$''), $a.\code{}$ represents the concatenation of the $[sa_j.\code{}]_j$ components, and $a'.\descnext{} = [\descnext{}[k]]_k $ is an array of descriptions of triggered sub-actions. Similar to chain CRTs, $a.\desc{}$ uniquely determines both $a.\code{}$ and $a.\descnext{}$. The array $\descnext{}$ can be empty, in which case it is ignored. We define a DAG-CRT as a sequence $\dagcrt := [a_i]_{i \in [n]}$ which satisfies $|a_n.\descnext{}| = 0$ and for all $1 \leq i \leq n-1$, $a_i.\descnext{} = a_{i+1}.\desc{}$. Note the equality is between \emph{arrays}.

We can now repurpose all predicates from~\cref{sec:chain-crt-model} for DAG CRTs. Under the repurposed predicate $\atomexec{}(\dagcrt, \cdot)$), we can also re-purpose~\cref{def:forward-atom-chain-crt} to work for DAG CRTs. We omit re-stating these repurposed versions since the only difference is the use of $\dagcrt$ instead of $\ccrt$.

\subsection{Liveness and Efficiency}

\myparagraph{Liveness} A CRAT protocol satisfies liveness if the participating rollups maintain liveness, as discussed in \cref{sec:problem-statement}.

\myparagraph{Efficiency}
The key efficiency metric for cross-rollup transactions (CRTs) is latency, which refers to the time delay between submitting a CRT instance and its successful commitment or abortion. Measuring time complexity in this setting is challenging due to the heterogeneous nature of participating rollups and their differing finality rules. However, L1 transactions are the bottleneck, and we define a "round" as the duration of the slowest layer 1 transaction, based on the blockchain that takes the longest to finalize a transaction. CRAT protocols should preserve the constant round finality of existing rollups, and we capture this requirement in~\cref{def:efficiency}.

\begin{definition}[Efficiency]~\label{def:efficiency}
    Let $\protocol$ be a CRAT protocol and let $\ccrt = [a_i]_{i \in [n]}$ be a chain-CRT of length $n$. We say $\protocol$ is \emph{efficient} if it completes in $O(1)$ rounds.
\end{definition}

Similarly, we can repurpose \cref{def:efficiency} for DAG CRTs by requiring that the protocol completes in $O(1)$ rounds regardless of the CRT length (number of actions) as well as the total number of sub-actions.

\done%

\section{\pname: Off-Chain Execution -- XEVM}~\label{sec:xevm}
\newcommand{\startsession}{\textproc{startSession}}
\newcommand{\checksessionid}{\textproc{checkSessionID}}
\newcommand{\trigger}{\textproc{trigger}}
\newcommand{\triggerinternal}{\textproc{triggerInternal}}
\newcommand{\action}{\textproc{action}}

In this section, we present XEVM, our modified execution engine for rollups. First, we extend the General System Contract ($\gsc$) introduced in~\cite{shared-val-seq-23} so that it securely supports chain CRTs of arbitrary length. Next, we present the full Executor algorithm specification for the XEVM component.

\subsection{General System Contract}

\begin{algorithm}[ht]
    \small
    \caption{General System Contract}
    \label{alg:gsc-single-trigger}
    \begin{algorithmic}[1]

    \Function{\textcolor{chain}{startSession}}{\textcolor{chain}{$\addr{}, \cdata$}}
            \State \textcolor{chain}{ \require ($\Not \; \acalled{}$) }
            \State \textcolor{chain}{ $\enonce{} \gets \enonce{} + 1$ }
            \State \textcolor{chain}{ $\snonce{} \gets \snonce{} + 1$ }
            \State \textcolor{chain}{ $\sactive{} \gets \true$ }
            \State \textcolor{chain}{ $\tcalled{} \gets \false$ }
            \State \textcolor{chain}{ \Call{triggerInternal}{$\msgsender, \addr{}, \cdata$} }
    \EndFunction
    \Function{\textcolor{chain}{checkSessionID}}{\textcolor{chain}{$\sid{}$}}
        \If{\textcolor{chain}{ $\sid{} = \snonce{}$ } } 
            \State \textcolor{chain}{ \require ($\sactive{}$) }
        \ElsIf{\textcolor{chain}{ $\sid{} = \snonce{} + 1$} }
            \State \textcolor{chain}{ $\sactive{} \gets \true$ }
            \State \textcolor{chain}{ $\snonce{} \gets \snonce{} + 1$ }
        \Else
            \; \textcolor{chain}{ \textbf{Revert} }
        \EndIf
    \EndFunction
    \Function{triggerInternal}{$\sender, \addr{}, \cdata$}
        \State  \textcolor{chain}{ \require ($\Not \; \tcalled{}$) \textcolor{comment}{\Comment{Only one trigger per tx}} }
        \State \textcolor{chain}{$\tcalled{} \gets \true$}
        \State \textcolor{chain}{$\sid{} \gets \snonce{}$}
        \State $\tnonce{} \gets \tnonce{} + 1$
        \State $ \msghash \gets \hash(\sender, \addr{}, \cdata, \tnonce{}, \textcolor{chain}{\sid{}})$
        \State \textbf{emit T}($\sender, \addr{}, \cdata, \tnonce{}$)
        \State $\ttree.\insertt(\msghash)$
    \EndFunction
    \Function{trigger}{$\addr{}, \cdata$}
        \State \textcolor{chain}{ \Call{triggerInternal}{$\msgsender, \addr{}, \cdata$} }
    \EndFunction
    \Function{action}{$\sender, \addr{}, \cdata, \textcolor{chain}{\sid{}}$}
        \State \textcolor{chain}{$\acalled{} \gets \true$}
        \State \textcolor{chain}{\Call{checkSessionId}{$\sid{}$}}
        \State \textcolor{chain}{$\tcalled{} \gets \false$}
        \State (status, \_) $\gets \addr{}.\call(\cdata)$ \textcolor{comment}{\Comment{execute triggered action} }
        \State \require (status)
        \State $\anonce{} \gets \anonce{} + 1$
        \State $ \msghash \gets \hash(\sender, \addr{}, \cdata, \anonce{}, \textcolor{chain}{\sid{}}) $
        \State $\atree.\insertt(\msghash)$
        \If{\textcolor{chain}{ ($\Not \; \tcalled{}$) } }
            \textcolor{chain}{ $\sactive{} \gets \false$ }
        \EndIf
        \State \textcolor{chain}{$\acalled{} \gets \false$}
        
    \EndFunction
    \end{algorithmic}
\end{algorithm}

 In this section, we present our modified General System Contract ($\gsc$) which handles chain CRTs. \cref{alg:gsc-single-trigger} fully specifies our $\gsc$ contract; we highlight our contributions with \textcolor{chain}{blue color} on top of SVS~\cite{shared-val-seq-23}. Besides the Merkle trees $(\ttree, \atree)$ and nonces ($\tnonce{}, \anonce{})$, the $\gsc$ contract maintains two additional nonces $(\snonce{}, \enonce{})$ and the $\sactive{}$ flag to keep track of CRT sessions. Nonces are initialized to zero and the $\sactive{}$ flag to $\false$.

The source action $(a_1)$ launches a new CRT session by calling the new entry point function $\startsession$, passing the $(\addr, \cdata)$ fields of the action it wishes to trigger. $\startsession$ increments both $\enonce{}$ and $\snonce{}$ by one, marks the session as active ($\sactive{} \gets \true$), and calls $\triggerinternal(\msgsender, \addr, \cdata)$. $\triggerinternal$ is an internal function only supposed to be called by $\startsession$. 
$\trigger$ is the function that application smart contracts are supposed to invoke.

The $\trigger$ function increments $\tnonce{}$, hashes the tuple $(\msgsender, \addr{}, \cdata, \tnonce{}, \sid{})$, and inserts the hash into $\ttree$. Here $\msgsender$ is the address of the caller contract and $\sid{}$ is taken to be the current $\snonce{}$ value. It also emits a Trigger event intended to be listened to by the Executor.  

The $\action$ function takes as input the tuple $(\sender, \addr{}, \cdata, \sid{})$ and can only be called by the Executor. First, it checks that $\sid{}$ matches either $\snonce{}$, which requires that $\sactive{} = \true$ , or $\snonce{} + 1$, signaling a new CRT session. Then, it sets $\xsender{} \gets \sender$ and calls the smart contract method specified by $(\addr{}, \cdata)$. The callee contract can access the public variable $\xsender{}$ which specifies the cross-rollup sender. Next, it increments $\anonce{}$, hashes the tuple $(\sender, \addr{}, \cdata, \anonce{}, \sid{})$, and inserts the hash into $\atree$. If the $\tcalled{}$ flag is raised during the call to $(\addr{}, \cdata)$ (i.e., there is a nested call to $\trigger$), this means that the current CRT session is not finished. Otherwise, this is the end of the current CRT session and it marks $\sactive{} \gets \false$.

We assume a unique copy of the $\gsc$ contract is deployed on each rollup. Our design guarantees that the $\trigger$ function can only be called either by $\startsession$ or in a nested manner during an $\action$ call, as well as that $\startsession$ can only be called by a source action ($a_1$), and never in a nested manner during an $\action$ call.

\subsection{Executor Specification}

In this section, we describe the XEVM component of the Executor, formally specified in~\cref{alg:shared-executor-xevm-chain-crt}. The Executor algorithm manages the execution of both local and non-local transactions across the two rollups. We assume a function $\textproc{evm}$ which takes as input a transaction and executes it on the specified rollup according to EVM rules. The Executor maintains the current state and a state checkpoint for each rollup. To execute a transaction $\tx$ received by a user, he calls $\textproc{entryPoint}(\tx)$, which creates state checkpoints for all rollups before proceeding to call \textproc{xevm}($\tx$). The $\textproc{xevm}(\tx)$ function uses the (assumed) $\textproc{evm}$ function to execute $\tx$ and listens for a trigger event. If there is a triggered transaction $\tx'$, the Executor constructs a wrapper transaction $\tx''$ which calls $\gsc.\action(\cdot)$ with the description of $\tx'$ as argument. It then recursively calls $\textproc{xevm}(\tx'')$, and so on, until all trigger calls have been handled. If any transaction or its triggered transaction fail, this failure propagates up to $\textproc{entryPoint} $, which restores the state of all rollups to their checkpoints.

\section{\pname: 2PC Protocol}~\label{sec:2pc-trustless}
In this section, we describe the Two-Phase Commit (2PC) protocol between the two $\vsm{}$s and driven by the Executor. We omit formally specifying the Executor's role in the 2PC protocol, since it follows directly from the $\vsm{}$ specification. As such, the entire section focuses on the $\vsm{}$ specification.

The $\vsm{}$ contract consists of three main functions ($\textproc{updateDigest}$, $\textproc{commit}$, and $\textproc{abort}$), shown in \cref{alg:vsm-generic}, which serves as a generic 2PC framework between VSMs. Concrete procedures -- such as leader determination and evidence verification -- are performed by functions $\textproc{IsLocal}, \textproc{VerPreComEvd}, \textproc{VerComEvd}$, and $\textproc{VerAbEvd}$, which we describe in this section and formally specify in \cref{alg:vsm-trustless-p1} in~\cref{apdx:vsm}.

\myparagraph{Bridge abstraction}
Recall from \cref{par:background-bridges} and \cref{rem:bridge-assumption} that we assume a trustless bridge between the underlying L1 chains. In the following, the $\ovsm$ structure contains state attributes of the other $\vsm{}$, which can be verified against the bridge using attribute proofs which we denote by $\attproofs$. 

\begin{algorithm}[ht]
    \small
    \caption{Validator Smart Contract: Generic 2PC}
    \label{alg:vsm-generic}
    \begin{algorithmic}[1] %

        \Function{UpdateDigest}{$\dstt{},  \valproof, \locevd, \pcevd$}
            \State \textbf{Require} $(\pstat = \free)$
            \State \Call{VerValProof}{$\dstt{}, \valproof$}
            \If{\Call{IsLocal}{$\dstt{}, \locevd$}} \; $\dst{} \gets \dstt{}$
            \Else
                \State \Call{VerPreComEvd}{$\dstt{}, \pcevd$}
                \State $\dstemp{} \gets \dstt{}$, \; $\pstat \gets \paired$
            \EndIf
        \EndFunction
        \Function{Commit}{$\cevd$}
            \State \require $(\pstat = \paired)$
            \State \Call{VerComEvd}{$\cevd$}
            \State $\dst{} \gets \dstemp{}$, \; $\pstat \gets \free$
        \EndFunction
        \Function{Abort}{$\abevd$}
            \State \require $(\pstat = \paired)$
            \State \Call{VerAbEvd}{$\abevd$}
            \State $\pstat \gets \free$
        \EndFunction
    \end{algorithmic}
\end{algorithm}

\myparagraph{State} The $\vsm{}$'s state consists of its ID ($\vsmid{}$) and the current rollup state digest ($\dst{}$). We also assume that it knows the address and ID of the VSM that it is connected to (see \cref{rem:vsm-setup}). The $\pstat \in \{\free, \paired \}$ variable is initialized to $\free$. All remaining state variables are specified in~\cref{alg:vsm-trustless-p1}.

\myparagraph{Entrypoint} The $\textproc{updateDigest}$ function takes as input a new state digest ($\dstt{}$), a validity proof ($\valproof$), local evidence ($\locevd$), and pre-commit evidence. First, the $\vsm{}$ verifies $\valproof$ using the (assumed) existing zk-proof verification mechanism (\textproc{VerValProof} function). Next, it parses the local evidence $(\roots, \attproofs) \gets \locevd$ which is used to determine whether $\dstt{}$ is a local or non-local state digest. It parses the claimed roots of the $\gsc$ contract  $(\troot{}, \aroot{},  \troot{}', \aroot{}') \gets \roots$ and verifies $\attproofs$ to check the membership of $(\troot{}, \aroot{})$ (resp. $(\troot{}', \aroot{}')$) against $\dst{}$ (resp. $\dstt{}$). If $\troot{}' = \troot{}$ and $\aroot{}' = \aroot{}$, then $\dstt{}$ is a $\emph{local}$ digest and thus accepted immediately $(\dst{} \gets \dstt{})$. Otherwise, $\dstt{}$ is a $\emph{non-local}$ digest and the $\vsm{}$ proceeds to pre-commit. Note that $\locevd$ additionally contains the $\snonce{}'$ and $\enonce{}'$ variables of the $\gsc$ contract, which are also verified under $\dstt{}$, and stored to be used in the commit round. 

\myparagraph{Pre-commit} If the $\vsm{}$ successfully verifies the supplied pre-commit evidence ($\pcevd$), then it stores $\dstt{}$ as the temporary digest and switches to $\paired$ status.
 In the $\textproc{VerPreComEvd}$ function, the $\vsm{}$ parses the pre-commit evidence $(\ovsm, \attproofs, \dstt{O}) \gets \pcevd$, where $(\ovsm, \attproofs)$ are verified against a bridge-relayed state digest of the other L1 chain. Next, it computes the 2PC instance identifier as $\idx \gets \textproc{GetIndex}(\dst{}, \dstt{}', \ovsm.\dst{}, \dstt{O})$, where $\textproc{GetIndex}(\cdot)$ deterministically orders and hashes the four state digests, so that both VSMs compute the same $\idx$ value. The $\idx$ identifier is stored and used to determine the leader of this 2PC instance. If the $\vsm{}$ is the \textit{leader}, it checks whether $\ovsm.\pstat = \free$. If the $\vsm{}$ is \textit{not the leader}, it checks that the computed index matches the leader VSM's index and that the leader VSM is in $\paired$ status.

\myparagraph{Commit/Abort} If the $\vsm{}$ successfully verifies the supplied commit evidence ($\cevd$), then it accepts the temporary state digest as final ($\dst{} \gets \dstemp{}$) and switches to $\free$ status.
 In the $\textproc{VerComEvd}$ function, the $\vsm{}$ parses the commit evidence $(\ovsm, \attproofs) \gets \cevd$, where $\ovsm$ and $\attproofs$ are first verified similarly to the pre-commit round. Next, it fetches the current 2PC instance identifier $\idx$ and determines the leader of this 2PC instance. If the $\vsm{}$ is the \textit{leader}, it checks whether $\ovsm.\pstat = \paired$ and $\ovsm.\idx = \idx$. It also verifies that the new pair of state digests ($\dstemp{}, \dst{}$) satisfy \emph{atomicity} by checking that (1) the Merkle roots of the $\gsc$ contracts match (i.e., $\troot{}' = \ovsm.\aroot{}'$ and $\aroot{}' = \ovsm.\troot{}'$) and (2) the sum of the two entry nonces is equal to both session nonces. If so, it stores a $\commit$ decision under $\idx$. If the $\vsm{}$ is \textit{not the leader}, it follows the leader VSM's decision under $\idx$. The $\textproc{abort}$ and $\textproc{verAbEvd}$ functions are similar to the commit case; see~\cref{alg:vsm-trustless-p1} for more details.

\section{\pname: Safety and Liveness}~\label{sec:safety-liveness}
\subsection{Safety}~\label{sec:safety-chain-crt}

In this section, we prove that $\pname$ satisfies atomicity per~\cref{def:forward-atom-chain-crt}. First, we prove that if the off-chain execution of the CRT does not satisfy all-or-nothing or serializability, then certain predicates about the two $\gsc$ states will not hold. Second, we prove that if at least one of the $\gsc$ state predicates does not hold, then the two $\vsm{}$s on L1 will not be able to complete the 2PC and commit to the new pair of faulty state digests. We use three helper lemmas which collectively imply the first step of our proof sketch. Note that \cref{lem:svs-gsc-all-or-none-exec} and \cref{lem:svs-gsc-honest-or-reverse-order} assume the SVS~\cite{shared-val-seq-23} $\gsc$ contract. Our modifications to the $\gsc$ contract do not affect these lemmas, so their implications directly apply to the more robust chain-CRT $\gsc$.

 An observation that will prove useful is that if a non-local action $a$ is part of an L1-accepted batch, then the fields $a.\desc{}$ and $a.\descnext{}$ are inserted into the trigger and action trees, respectively -- besides $a.\code{}$ being executed. We capture this intuitive obesrvation in \cref{claim:a-exec-iff-desc-inserted-chain}. 
 
\begin{claim}~\label{claim:a-exec-iff-desc-inserted-chain}
    Let $\ttree$ and $\atree$ be the trigger and action trees of $\gsc$ after $\batch{}$ is accepted by the $\vsm{}$ of a rollup, and let $a$ be a \emph{non-local} action. Then $a \in \batch{}$ if and only if $a.\desc{} \in \atree$ and $a.\descnext{} \in \ttree$.~\footnote{This also captures the edge cases when $a$ is the first or last CRT action; e.g. if $a.\desc{} = \nul$, then trivially $a.\desc{} \in \atree$.}
\end{claim}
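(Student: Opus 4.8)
The plan is to argue both directions by tracing how the $\gsc$ contract (Algorithm \ref{alg:gsc-single-trigger}) processes a non-local action, leveraging the fact that the batch $\batch{}$ was accepted by the $\vsm{}$ (which, via the validity proof, certifies honest EVM execution of every transaction in $\batch{}$). The central structural fact I would establish first is that a non-local action $a$ is, by definition of the chain-CRT model, either the source action $a_1$ (executed via an $\action$-wrapped call whose target invokes $\startsession$, which then calls $\triggerinternal$) or a non-source action $a_i$ with $i \geq 2$ (executed by the Executor through a $\gsc.\action(\cdot)$ wrapper transaction, per the XEVM Executor specification). In either case, two things happen in lockstep inside a single EVM transaction: (i) $a.\code{}$ is executed against the target contract, and (ii) the appropriate hash is inserted into $\atree$ and, if $a.\descnext{} \neq \nul$, into $\ttree$.

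For the forward direction ($a \in \batch{} \Rightarrow a.\desc{} \in \atree$ and $a.\descnext{} \in \ttree$): I would note that executing $a$ as part of $\batch{}$ means the corresponding $\action$ call ran to completion without reverting (otherwise the Executor's \textproc{entryPoint} rolls back and $a \notin \batch{}$, since a reverted action is not part of the accepted state transition). Inside $\action$, line by line, the hash of $(\sender, \addr{}, \cdata, \anonce{}, \sid{})$ is inserted into $\atree$ — this encodes exactly $a.\desc{} = (\rol{}, \addr{}, \cdata{})$ (together with the nonce/session bookkeeping), so $a.\desc{} \in \atree$. For $a.\descnext{}$: since $a$ is non-local and not the last CRT action, executing $a.\code{}$ triggers a nested $\trigger$ (or, for the source action, $\startsession$ routes through $\triggerinternal$) with arguments matching $a.\descnext{}.\desc{}$, and that call inserts the hash of $(\msgsender, \addr{}, \cdata, \tnonce{}, \sid{})$ into $\ttree$, giving $a.\descnext{} \in \ttree$. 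The edge cases ($a$ first or last in the CRT, where $a.\desc{}$ or $a.\descnext{}$ degenerates) are handled by the footnote's convention that $\nul \in \atree$, $\nul \in \ttree$ trivially.

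For the reverse direction ($a.\desc{} \in \atree$ and $a.\descnext{} \in \ttree \Rightarrow a \in \batch{}$): here I would argue contrapositively/by injectivity. The only way a hash encoding $a.\desc{}$ enters $\atree$ is through an $\action$ call (the sole function that writes to $\atree$), and that same $\action$ call — which the validity proof certifies as honest EVM execution — necessarily executed $a.\code{}$ (via the $\addr{}.\call(\cdata)$ line, with the \require on its status ensuring it did not revert) and inserted $a.\descnext{}$ into $\ttree$ via the nested trigger. Collision resistance of the hash function $\hash$ and of the Merkle tree rules out a different action producing the same tree entries. Hence $a$'s code was executed as part of the accepted batch, i.e., $a \in \batch{}$.

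The main obstacle I anticipate is the bookkeeping around the nonce and session-ID fields: the tree entries are hashes of tuples that include $\anonce{}$, $\tnonce{}$, and $\sid{}$, not just $a.\desc{}$ / $a.\descnext{}$, so I need to be careful about what ``$a.\desc{} \in \atree$'' means precisely — presumably that there exists \emph{some} consistent choice of nonces/session ID making the hash land in the tree — and to make sure the argument that ``only $\action$ writes to $\atree$ and only $\trigger$/$\triggerinternal$ writes to $\ttree$'' is airtight given that $\trigger$ can be called both from $\startsession$ and nested inside $\action$. Establishing that these are the \emph{only} write paths, and that each write path couples an $\atree$ insertion of $a.\desc{}$ with a $\ttree$ insertion of $a.\descnext{}$ within one atomic transaction, is the crux; the rest is routine inspection of Algorithm \ref{alg:gsc-single-trigger}.
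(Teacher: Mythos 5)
Your proposal is correct and is essentially an expanded version of the paper's own proof, which is a one-liner observing that $(\desc{}, \code{}, \descnext{})$ comprise the same action and are inserted/executed atomically within a single transaction when $a \in \batch{}$; your extra care about the unique write paths to $\atree$/$\ttree$ and the nonce/session fields fleshes out the same idea. One small correction: the source action $a_1$ is \emph{not} action-wrapped --- it is the user-signed transaction that calls $\startsession$ directly, so its $\desc{}$ is never inserted into $\atree$ --- which is exactly why the claim's footnote treats the first (and last) CRT action as a degenerate edge case rather than routing it through $\action$.
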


\begin{proof}
     $(a.\desc{}, a.\code{}, a.\descnext{})$ comprise the same action $a$ and are inserted/executed atomically when $a \in \batch{}$.
\end{proof}

\cref{lem:svs-gsc-all-or-none-exec} states that the SVS $\gsc$ satisfies the all-or-nothing property of atomicity.

\begin{lemma}~\label{lem:svs-gsc-all-or-none-exec}
    Let $\ccrt = [a_i]_{i \in [n]}$ and $\{\rol{b}, \dst{b}, \dstt{b}, \batch{b}\}_{b \in \{1,2\}}$ be as in~\cref{def:forward-atom-chain-crt}, when using the \emph{original SVS} $\gsc$ contract for off-chain execution. For $b \in \{1,2\}$, let $\{\ttree_b, \atree_b \}$ be the $\gsc_b$ trees under $\dst{b}$.
    Also assume (I) $\ttree_1 = \atree_2$ and $\ttree_2 = \atree_1$.
    Then (I) is preserved under $(\dstt{1}, \dstt{2})$ if and only if there exists $d \in \{\true, \false\}$ such that, for all $i \in [n]$, $\actexec{}(\dst{b}, \dstt{b}, \batch{b}, a_i) = d$ for both $b \in \{1,2\}$ where $a_i.\rol{} = \rol{b}$.
\end{lemma}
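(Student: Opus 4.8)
The plan is to translate every $\actexec{}$ assertion into a statement about membership in the $\gsc$ Merkle trees, and then to exploit the chain relation $a_i.\descnext{} = a_{i+1}.\desc{}$ together with assumption (I) to propagate ``was executed'' along the entire chain, which forces the all-or-nothing dichotomy. First I would record the reduction: since $(\dstt{b}, \batch{b})$ is by definition the batch that advances $\vsm{b}$ from $\dst{b}$ to $\dstt{b}$, the predicate $\actexec{}(\dst{b}, \dstt{b}, \batch{b}, a_i)$ is nothing but ``$a_i \in \batch{b}$'' when $a_i.\rol{} = \rol{b}$; hence the right-hand side of the lemma says exactly that either every $a_i$ lies in $\batch{a_i.\rol{}}$, or none does.

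Writing $\ttree_b', \atree_b'$ for the two trees under $\dstt{b}$, the trees are append-only, so $\ttree_b \subseteq \ttree_b'$ and $\atree_b \subseteq \atree_b'$. Combining \cref{claim:a-exec-iff-desc-inserted-chain} with the structure of $\gsc$ -- and with the freshness of the nonce (and session id) baked into each inserted hash -- I would establish the bookkeeping fact that drives everything: the entries $\batch{b}$ adds to $\atree_b'$ are precisely the ``action entries'' of the executed CRT actions on $\rol{b}$ (i.e.\ $a_i.\desc{}$, with the source action contributing nothing, per the $\nul$ conventions of \cref{claim:a-exec-iff-desc-inserted-chain}); the entries it adds to $\ttree_b'$ are precisely their ``trigger entries'' (i.e.\ $a_i.\descnext{}$, which is $\nul$ and hence vacuous for the last action); all of these entries are pairwise distinct and disjoint from the old ones; and -- crucially -- for consecutive actions $a_i, a_{i+1}$, which sit on opposite rollups, the trigger entry that $a_i$ contributes to $\ttree_{a_i.\rol{}}$ and the action entry that $a_{i+1}$ contributes to $\atree_{a_{i+1}.\rol{}}$ are literally the same element (both encode $a_i.\descnext{} = a_{i+1}.\desc{}$ with aligned nonces).

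Given that fact, both directions are short. For ($\Leftarrow$): if no $a_i$ executes, no tree gains an entry, so (I) passes unchanged to the primed trees; if every $a_i$ executes, then for each $i \in [n-1]$ the single entry shared by $a_i$ and $a_{i+1}$ lands in $\ttree_{a_i.\rol{}}'$ and in $\atree_{a_{i+1}.\rol{}}'$, so matching these increments across the two (alternating) rollups and using (I) under $\dst{}$ gives $\ttree_1' = \atree_2'$ and $\ttree_2' = \atree_1'$. For ($\Rightarrow$): assuming (I) is preserved, I would show $a_i \in \batch{} \iff a_{i+1} \in \batch{}$ for every $i \in [n-1]$, which yields the dichotomy. ``Upward'': if $a_{i+1}$ executes, its action entry $a_{i+1}.\desc{}$ is a fresh member of $\atree_{a_{i+1}.\rol{}}'$, hence by preserved-(I) a fresh member of $\ttree_{a_i.\rol{}}'$; but by the bookkeeping fact the only way a fresh trigger entry equal to $a_i.\descnext{} = a_{i+1}.\desc{}$ can enter $\ttree_{a_i.\rol{}}'$ is for $a_i$ to have executed. ``Downward'' is the mirror image. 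The edge actions $a_1$ and $a_n$ are covered by the same argument using the $\nul$ conventions.

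The step I expect to be the real obstacle is the bookkeeping fact of the second paragraph: identifying exactly which hashes enter $\ttree_b$ and $\atree_b$ when $\batch{b}$ is applied, proving that $a_i$'s trigger entry and $a_{i+1}$'s action entry coincide (this is where the nonce/session-id discipline of $\gsc$ and the fact that consecutive CRT actions target different rollups actually get used), and deducing freshness and injectivity of these entries from collision resistance of the hash together with the monotonicity of the nonces. Once that is pinned down, the chain-propagation argument above is essentially immediate.
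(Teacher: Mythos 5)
Your proposal is correct and follows essentially the same route as the paper: it rests on \cref{claim:a-exec-iff-desc-inserted-chain} (your ``bookkeeping fact'' is an elaboration of that claim, with the nonce/freshness details the paper leaves implicit) together with the chain relation $a_i.\descnext{} = a_{i+1}.\desc{}$, and your chain-propagation argument for the forward direction is just the contrapositive form of the paper's argument, which picks a single consecutive pair with differing execution status and concludes $\ttree_1' \neq \atree_2'$. The only caveat, shared equally by the paper's omitted ``if'' direction, is that the all-executed case preserves (I) only under nonce-aligned (honest-order) execution, which is addressed separately in \cref{lem:svs-gsc-honest-or-reverse-order}.
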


\begin{proof}[Proof of \cref{lem:svs-gsc-all-or-none-exec}]
    The ``if'' direction corresponds to the honest execution; its proof is simple and thus omitted. We prove the contrapositive of the ``only if'' direction. Assume there exists $1 \leq i \leq n-1$ and such that (without loss of generality)  $a_{i}'.\rol{} = \rol{1}$ and $a_{i + 1}'.\rol{} = \rol{2}$, and $\actexec{}(\dst{1}, \dstt{1}, \batch{1}, a_{i}') = d_i \neq d_{i+1} = \actexec{}(\dst{2}, \dstt{2}, \batch{2}, a_{i+1}')$. Since $\ccrt$ is a chain CRT, we know $a_i'.\descnext{} = a_{i+1}'.\desc{} \neq \nul$.
    
    \noindent\textbf{Case 1:} $d_i = \true$ and $d_{i+1} = \false$. Using \cref{claim:a-exec-iff-desc-inserted-chain}, the former implies $a_{i}'.\descnext{} \in \ttree_{1}'$, and the latter implies $a_{i+1}'.\desc{} \notin \atree_{2}'$.
            
    \noindent\textbf{Case 2:} $d_i = \false$ and $d_{i+1} = \true$. The former implies $a_{i}'.\descnext{} \notin \ttree_{1}'$, and the latter implies $a_{i+1}'.\desc{} \in \atree_{2}' $.
    In either case we get $\ttree_{1}' \neq \atree_{2}'$.
\end{proof}

\cref{lem:svs-gsc-honest-or-reverse-order} states that under the SVS $\gsc$, an Executor can only \emph{fully reverse} the order of actions within a CRT -- and still pass the $\gsc$ root check. We defer its proof to \cref{apdx:chain-crt-proofs}. In the following, we say that $\batch{}$ \emph{preserves} (resp. \emph{fully reverses}) the relative order of actions in a CRT $\ccrt = [a_i]_{i \in [n]}$ if for all $0 \leq i_1 < i_2 \leq n$ such that $a_{i_1}, a_{i_2} \in \batch{}$, it holds that $a_{i_1} \underset{\batch{}}{\prec} a_{i_2}$  (resp. $a_{i_1} \underset{\batch{}}{\succ} a_{i_2}$).

\begin{lemma}~\label{lem:svs-gsc-honest-or-reverse-order}
    Consider the same setup as in \cref{lem:svs-gsc-all-or-none-exec} and assume $\ttree_1 = \atree_2$ and $\ttree_2 = \atree_1$. Also assume $\actexec{}(\dst{b}, \dstt{b}, \batch{b}, a_i) = \true$ for $b \in \{1,2\}$ where $a_i.\rol{} = \rol{b}$. Then both $\batch{1}$ and $\batch{2}$ either preserve or fully reverse the relative order of actions if and only if $\ttree_1' = \atree_2'$ and $\ttree_2' = \atree_1'$.
\end{lemma}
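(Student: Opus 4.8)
The plan is to reduce both Merkle-tree equalities to a single combinatorial condition on how $\batch{1}$ and $\batch{2}$ permute the actions of $\ccrt$, and then to show that this condition is met only by the global identity or the global reversal.

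First I would unfold the structure of the leaves that the two batches append to the $\gsc$ trees. We are in the regime where every CRT action is executed ($\actexec{}(\cdot)=\true$), so by \cref{claim:a-exec-iff-desc-inserted-chain} and the code of the SVS $\gsc$ (\cref{alg:gsc-single-trigger}, ignoring the blue additions): when $a_i$ (on rollup $b$) is processed in $\batch{b}$ it triggers $a_{i+1}$ and appends to $\ttree_b$ a leaf encoding $a_{i+1}.\desc{}$ tagged with a nonce equal to its running trigger-rank (its position among the trigger insertions $\batch{b}$ performs); symmetrically, when $a_{i+1}$ (on rollup $b'$) is executed via $\action$ it appends to $\atree_{b'}$ a leaf encoding $a_{i+1}.\desc{}$ tagged with its running action-rank in $\batch{b'}$. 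I would also observe that if two consecutive CRT actions lie on the same rollup, the trigger tree of that rollup gains a leaf with no counterpart in the other rollup's action tree, so neither $\ttree_1=\atree_2$ nor $\ttree_2=\atree_1$ can be restored even under honest execution; hence the statement is effectively about CRTs that alternate between $\rol{1}$ and $\rol{2}$, which I take as a standing assumption. Now, $\ttree_1=\atree_2$ and $\ttree_2=\atree_1$ hold \emph{before} the batch, so the compared trees start from identical leaf lists and equal counters; thus collision resistance of the Merkle hash together with injectivity of the leaf encoding implies that $\ttree_1'=\atree_2'$ (resp. $\ttree_2'=\atree_1'$) holds \emph{iff}, for every consecutive pair $(a_i,a_{i+1})$ of $\ccrt$ crossing from $\rol{1}$ to $\rol{2}$ (resp. from $\rol{2}$ to $\rol{1}$), the trigger-rank of $a_i$ in its batch equals the action-rank of $a_{i+1}$ in its batch.

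Next I would exploit the two boundary exceptions: the terminal action $a_n$ triggers nothing, and the source action $a_1$ produces no action-tree leaf since it enters through $\startsession$ rather than $\action$. Let $\rho_b$ record the order in which $\batch{b}$ lists the rollup-$b$ actions of $\ccrt$, and pair $a_{2j-1}$ with $a_{2j}$. Then the rank equalities from Step~1 collapse into two relations: (A) $\batch{2}$ orders the even-indexed actions exactly as $\batch{1}$ orders the odd-indexed ones under this pairing; and (B) the same relation after deleting $a_1$ from $\batch{1}$ and the terminal action from $\batch{2}$, shifted by one slot. I would then conclude by a short case analysis on the batch-positions $p$ and $q$ of the source pair and the terminal pair: at the first position where (A) and (B) could disagree, one is forced into either $p=1$ -- which, propagated forward, makes $q$ maximal and $\rho_1=\rho_2=\mathrm{id}$ -- or $q=1$ -- which makes $p$ maximal and $\rho_1,\rho_2$ the full reversals. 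These are precisely the ``both preserve'' and ``both fully reverse'' cases. The converse is then immediate: for $\rho_1=\rho_2=\mathrm{id}$ this is the honest execution, and for the full reversals one checks directly that (A) and (B) hold, so the appended leaf lists coincide verbatim and both tree equalities follow (collision resistance is not needed for this direction).

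I expect Step~2 -- the combinatorial core -- to be the main obstacle: each rank equality couples only \emph{adjacent} CRT actions and in isolation permits many intermediate shuffles, so rigidifying the permutation to its two extreme values relies delicately on the interaction of the two tree equalities with the two boundary asymmetries ($a_1$ has no action leaf, $a_n$ has no trigger leaf). Secondary care is needed for the parity of $n$ (when $a_1$ and $a_n$ lie on the same rollup, that rollup carries one extra action and the pairing shifts), for the degenerate small cases where some rollup carries at most two actions (there every order trivially ``preserves or fully reverses,'' so the claim is vacuous), and for justifying the leaf-encoding injectivity used in Step~1 -- this holds because the nonce is part of every hash preimage and distinct CRT actions have distinct encodings at a fixed nonce, so collision resistance genuinely yields the per-pair rank equalities.
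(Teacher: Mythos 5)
Your proposal is correct and follows essentially the same route as the paper's own proof: you reduce the two tree equalities to positional rank-matching constraints between consecutive CRT actions (via insertion order and the nonce carried in each leaf), then exploit the two boundary asymmetries ($a_1$ contributes no action-tree leaf, $a_n$ triggers nothing) and a case split on the batch positions of the source and terminal actions to force either the identity or the full reversal, checking the converse directly --- this is exactly the paper's argument with its indices $f$ and $l$ playing the role of your $p$ and $q$. Two cosmetic nits: under the original SVS $\gsc$ there is no startSession (the point stands because $a_1$ is the user-signed entrypoint executed as is, hence not action-wrapped), and your Step~2 is only sketched, but its intended dichotomy matches the paper's case analysis, and your explicit alternation assumption is also implicit in the paper's ``WLOG $n=2k$'' setup.
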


\cref{lem:chain-gsc-honest-order} states that our chain-CRT $\gsc$ is safe against the ``reversal'' attack introduced in \cref{lem:svs-gsc-honest-or-reverse-order}. We defer the proof of \cref{lem:chain-gsc-honest-order} to~\cref{apdx:chain-crt-proofs}. 

\begin{lemma}~\label{lem:chain-gsc-honest-order}
    Let $\ccrt = [a_i]_{i \in [n]}$ and $\{\rol{b}, \dst{b}, \dstt{b}, \batch{b}\}_{b \in \{1,2\}}$ be as in \cref{def:forward-atom-chain-crt}, when using our \emph{chain-CRT} $\gsc$ contract for off-chain execution. For $b \in \{1,2\}$, let $\{\ttree_b, \atree_b, \sactive{b}, \snonce{b},  \\ \enonce{b} \}$ be the $\gsc_b$ state under $\dst{b}$. Assume $\actexec{}(\dst{b}, \dstt{b}, \batch{b}, a_i) = \true$ for $b \in \{1,2\}$ where $a_i.\rol{} = \rol{b}$. Also assume 

    \noindent(I) $\ttree_1 = \atree_2$ and $\ttree_2 = \atree_1$,
    
    \noindent(II) $\snonce{1} = \snonce{2} = \enonce{1} + \enonce{2}$, and

    \noindent(III) $\sactive{b} = \false$ for at least one $b \in \{1,2\}$.

    \noindent Then $\batch{1}$ and $\batch{2}$ preserve the relative order of all actions if and only if (I,II,III) are preserved under $(\dstt{1}, \dstt{2})$.
\end{lemma}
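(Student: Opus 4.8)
The plan is to prove the two implications separately, mirroring the proofs of \cref{lem:svs-gsc-all-or-none-exec,lem:svs-gsc-honest-or-reverse-order}. For the ``if'' direction I would assume both $\batch{1}$ and $\batch{2}$ preserve the relative order of $\ccrt$'s actions (the honest execution) and trace the CRT's session lifecycle through \cref{alg:gsc-single-trigger}, establishing by induction on the action index that: (a) every hash that $\ccrt$ inserts into a $\ttree$ or $\atree$ carries one common session id $\sigma$ -- namely the value of $\snonce{}$ on $a_1$'s rollup immediately after $a_1$'s $\startsession$ call -- because $\checksessionid$ either leaves $\snonce{}$ fixed (the branch where the supplied id equals $\snonce{}$) or raises it to that id (the branch where it equals $\snonce{}+1$), while $\triggerinternal$ and $\trigger$ stamp the current $\snonce{}$; (b) $a_1$'s $\startsession$ is the only such call charged to $\ccrt$, so exactly one of $\enonce{1},\enonce{2}$ increases by one, while each $\snonce{b}$ increases by exactly one, the first time the session reaches rollup $b$; (c) since $a_n.\descnext{}=\nul$, the $\tcalled{}$ test at the end of $a_n$'s $\action$ clears $\sactive{}$ on $a_n$'s rollup, and on the other rollup (if distinct) the flag was cleared earlier or never raised. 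Combining (a)--(c) with \cref{claim:a-exec-iff-desc-inserted-chain} then shows that (I), (II) and (III) are each preserved; this part is routine bookkeeping.

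For the ``only if'' direction I would argue the contrapositive: under the hypotheses -- all of $\ccrt$'s actions execute, (I),(II),(III) hold before -- I suppose $\batch{1},\batch{2}$ do \emph{not} preserve the relative order and must conclude that one of (I),(II),(III) fails afterwards. First I appeal to \cref{lem:svs-gsc-honest-or-reverse-order}, whose implications apply to the chain-CRT $\gsc$ as well: under these hypotheses, if (I) is preserved then the batches either preserve or \emph{fully reverse} the relative order. Hence it suffices to handle the case where (I) is preserved and the batches fully reverse it (but do not preserve it), and to show this case is impossible once (II) and (III) are also required to be preserved.

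Ruling out the full-reversal case is the crux. In a genuine reversal, some action $a_i$ with $i\ge 2$ executes before $a_1$ (this is vacuous when $n=2$, where ``preserve'' and ``fully reverse'' coincide, so assume $n\ge 3$). As in the ``if'' direction, (I) preserved together with all actions executing forces every $\ccrt$ tree entry to carry the common session id $\sigma$, and $\checksessionid$ pins $\sigma$ to the value of $\snonce{}$ on $a_1$'s rollup right after $a_1$'s $\startsession$, so $\snonce{}$ there equals $\sigma-1$ just before $a_1$ runs. I would then track the $\action$ executing $a_i$: its fourth argument must be $\sigma$ (else its $\atree$ entry would not match the corresponding $\ttree$ entry), but with $\snonce{}=\sigma-1$ still in force on the relevant rollup, $\checksessionid(\sigma)$ can only avoid reverting by taking the $\snonce{}+1$ branch, which raises $\snonce{}$ to $\sigma$ prematurely; then $a_1$'s later $\startsession$ raises it to $\sigma+1$, so the entry $a_1$ stamps carries $\sigma+1\neq\sigma$, contradicting (I). The remaining possibilities are that some later $\action$ reverts (its supplied id is $\ge\snonce{}+2$, or $\sactive{}=\false$), contradicting that all actions execute; or that the net change to $(\snonce{1},\snonce{2},\enonce{1},\enonce{2})$ no longer satisfies $\snonce{1}=\snonce{2}=\enonce{1}+\enonce{2}$, contradicting (II); or that a session is left active on both rollups, contradicting (III). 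I expect this step to be the main obstacle: for the SVS $\gsc$ a full reversal genuinely passes the root check (this is precisely \cref{lem:svs-gsc-honest-or-reverse-order}), so the entire argument rests on the session-nonce machinery and must carefully interleave the $\sactive{}$ toggling, the conditional $\snonce{}$-bump inside $\checksessionid$, and the entry-nonce sum, across a CRT of arbitrary length whose actions alternate over the two rollups in an arbitrary pattern. Making the case analysis exhaustive -- in particular when $a_1$ and $a_n$ share a rollup, and when the session returns to a rollup it has already visited -- is the delicate part; everything else reduces to bookkeeping plus the black-box use of \cref{lem:svs-gsc-honest-or-reverse-order}.
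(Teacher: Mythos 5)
Your proposal takes essentially the same route as the paper's proof: argue the substantive direction by contrapositive, invoke \cref{lem:svs-gsc-honest-or-reverse-order} to reduce the non-order-preserving case to a full reversal, and then rule out the reversal with the session machinery -- the first out-of-order action-wrapped call is forced (via (III)) into the $\snonce{}+1$ branch of $\checksessionid$, after which an $a_1$ that calls $\startsession$ stamps a session id one too high and breaks (I), while an $a_1$ that bypasses $\startsession$ breaks the nonce equation (II). The paper's proof simply carries out the case analysis you defer (which rollup has $\sactive{} = \false$, and proper vs.\ improper $a_1$), so there is no substantive difference in approach.
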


We now state and prove our main theorem.

\begin{theorem}~\label{thm:chain-crate-forward-atomicity-two-rollups}
    $\pname$ satisfies atomicity for chain-CRTs and $|\mathcal{R}| = 2$ rollups.
\end{theorem}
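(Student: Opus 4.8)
The plan is to follow the two-step strategy announced at the start of \cref{sec:safety-chain-crt}: \textbf{(Step~1)} show that if the two $\gsc$ states satisfy the three predicates (I) $\ttree_1=\atree_2$ and $\ttree_2=\atree_1$, (II) $\snonce{1}=\snonce{2}=\enonce{1}+\enonce{2}$, and (III) $\sactive{b}=\false$ for at least one $b$, both under the starting digests $(\dst{1},\dst{2})$ and under the accepted digests $(\dstt{1},\dstt{2})$, then the off-chain execution of $\ccrt$ is all-or-nothing and preserves the within-rollup order of its actions; and \textbf{(Step~2)} show that the 2PC component of $\pname$ lets the two $\vsm{}$s jointly accept a non-local pair $(\dstt{1},\dstt{2})$ only when (I)--(III) are in fact preserved. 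Chaining the two steps gives $\atomexec(\ccrt,\{\rol{j},\dst{j},\dstt{j},\batch{j}\}_j)=\true$, which is exactly \cref{def:forward-atom-chain-crt}.

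\emph{Step~1.} I would first record the invariant that $(\dst{1},\dst{2})$ satisfies (I)--(III): it holds for the genesis digests (empty trees, zero nonces, $\sactive{}=\false$), it is untouched by any accepted \emph{local} batch (local actions never call the cross-rollup entry points of $\gsc$), and -- by Step~2 applied inductively to earlier non-local batches -- it is preserved by every accepted non-local batch. Now consider the immediate-next-accepted batches $(\batch{1},\batch{2})$. If they contain no action of $\ccrt$, then none of $\ccrt$'s actions is executed, i.e.\ case~(1) of $\atomexec$ holds. Otherwise, since (I) is preserved, \cref{lem:svs-gsc-all-or-none-exec} -- which, as observed in \cref{sec:safety-chain-crt}, still applies to our chain-CRT $\gsc$ -- forces a uniform execution bit $d$ over all actions of $\ccrt$; as at least one is present, $d=\true$, so every $a_i$ is executed. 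The hypotheses of \cref{lem:chain-gsc-honest-order} are then met, and since (I)--(III) are preserved it yields that both $\batch{1}$ and $\batch{2}$ preserve the relative order of $\ccrt$'s actions, which is exactly conditions~(2a) and~(2b). In either branch $\atomexec=\true$. (Here \cref{claim:a-exec-iff-desc-inserted-chain} is what lets tree membership stand in for ``$a_i$ was executed'', and \cref{lem:svs-gsc-honest-or-reverse-order} is not invoked directly -- it merely pinpoints the residual ``full reversal'' loophole of plain SVS that the session-nonce machinery behind \cref{lem:chain-gsc-honest-order} is designed to close.)

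\emph{Step~2.} Here I would argue that a computationally bounded, possibly malicious Executor cannot get a non-local $(\dstt{1},\dstt{2})$ accepted unless (I)--(III) hold under it. Any batch containing a $\ccrt$ action changes the $\gsc$ trees, so $\textproc{IsLocal}$ returns $\false$ and the only route to acceptance is the 2PC path of \cref{alg:vsm-generic}; the $\textproc{VerValProof}$ check ties each $\batch{j}$ to an honest XEVM execution, so $\actexec{}$ is well defined and the trees, nonces, and flags under $\dstt{j}$ are exactly those produced by $\batch{j}$. In the commit round the leader $\vsm{}$ records a $\commit$ decision only after verifying (a) the cross-rollup root equalities $\troot{}'=\ovsm.\aroot{}'$ and $\aroot{}'=\ovsm.\troot{}'$ (this is (I)), (b) $\enonce{1}+\enonce{2}=\snonce{1}=\snonce{2}$ (this is (II)), and (c) the $\sactive{}$ witness for (III) carried in the bridge-relayed evidence (the remaining check in the full $\vsm{}$ specification, \cref{alg:vsm-trustless-p1}). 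The delicate part is showing the two $\vsm{}$s never diverge: the non-leader pre-commits only after confirming the leader is $\paired$, the leader may $\abort$ only while the non-leader is still $\free$, and the non-leader's commit/abort merely mirrors the leader's decision recorded under the shared identifier $\idx$; together with the assumed safety of the L1 chains and of the bridge (\cref{rem:bridge-assumption}), this yields the textbook 2PC guarantee that both $\vsm{}$s commit $(\dstt{1},\dstt{2})$ or neither does, which also closes the induction used in Step~1.

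\emph{Wrap-up and the main obstacle.} If the 2PC instance for $\ccrt$ aborts, nothing new is accepted on either chain; invoking the liveness of the Executor, the immediate-next-accepted batches are then later ones, which either omit $\ccrt$ (case~(1)) or, on a retry, carry it in order by Step~2 -- so $\atomexec=\true$ in every case. I expect the real difficulty to live in Step~2: rigorously lifting textbook 2PC atomicity to this heterogeneous setting -- two autonomous contracts on distinct L1s, an untrusted coordinator, and asynchronous, finality-lagged bridge relays -- and in particular ruling out every schedule in which one $\vsm{}$ commits while the other aborts, carefully tracking which $\vsm{}$'s decision is authoritative under each $\idx$, and arguing that a reverted (incorrect-evidence) transaction always leaves the $\vsm{}$ in a recoverable $\free$ or $\paired$ state. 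Step~1, in contrast, is essentially bookkeeping once \cref{lem:svs-gsc-all-or-none-exec} and \cref{lem:chain-gsc-honest-order} are in hand.
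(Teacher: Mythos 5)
Your proposal is correct and follows essentially the same route as the paper: the same two-step decomposition in which \cref{lem:svs-gsc-all-or-none-exec} and \cref{lem:chain-gsc-honest-order} tie all-or-nothing execution and within-rollup order to preservation of the $\gsc$ predicates (I)--(III), and the leader/follower 2PC argument ensures a non-local pair $(\dstt{1},\dstt{2})$ is committed by both $\vsm{}$s only when those predicates are verified; the paper merely phrases this as a proof by contradiction where you argue the contrapositive directly, and it likewise invokes (briefly) the inductive preservation of (I)--(III) across batches that you spell out. The only slight inaccuracy is attributing an explicit $\sactive{}$ check to \cref{alg:vsm-trustless-p1} (the commit evidence there verifies the root equalities and the nonce equation), but this mirrors the paper's own proof, which asserts the leader sees all of (I,II,III) preserved before committing.
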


\begin{proof}[Proof of \cref{thm:chain-crate-forward-atomicity-two-rollups}]
    Let $\ccrt = [a_i']_{i \in [n]}$ and $\{\vsm{b}, \rol{b}, \dst{b}, \dstt{b}, \batch{b}\}_{b \in \{1,2\}}$ be as in \cref{def:forward-atom-chain-crt}, where $\{\dstt{b}, \batch{b}\}_{b \in \{1,2\}}$ are the new state digests and batches after running $\pname$. For $b \in \{1,2\}$, let $\{\ttree_b, \atree_b, \sactive{b}, \snonce{b}, \\ \enonce{b} \}$ be the $\gsc_b$ state under $\dst{b}$.
    Assume properties (I,II,II) from \cref{lem:chain-gsc-honest-order} all hold under $(\dst{1}, \dst{2})$.
    Also let $\idx = \textproc{getIndex}(\dst{1}, \dstt{1}, \dst{2}, \dstt{2})$ be the 2PC instance index.
    For contradiction, assume $\pname$ does \emph{not} satisfy atomicity, i.e., assume
    \begin{align*}
        \atomexec(\ccrt, \{\rol{b}, \dst{b}, \dstt{b}, \batch{b}\}_{b \in \{1,2\}}) = \false.
    \end{align*}
    This means either (a) the $\actexec{}(\cdot)$ predicates do not all agree, or (b) all $\actexec{}(\cdot) = \true$ and at least one pair of actions are not in order in their batch. In case (a), \cref{lem:svs-gsc-all-or-none-exec} says that property (I) is not preserved. In case (b), \cref{lem:chain-gsc-honest-order} says that at least one of (I,II,III) is not preserved. 
    In either case, using an inductive argument across all CRTs present in $\batch{1}$ and $\batch{2}$, at least one of (I,II,III) is not preserved under $(\dstt{1}, \dstt{2})$.
    
    \smallskip
    We now consider the 2PC component of $\pname$ leading to the new state digests. First note that in either case (a) or (b), at least one of $(\dstt{1}, \dstt{2})$ is a non-local digest; without loss of generality, assume $\dstt{1}$ is non-local. Thus $\vsm{1}$ must have executed $\textproc{PreCommit}$ successfully, followed by $\textproc{Commit}$, in order to commit to the new state digest $\dstt{1}$. This is because $\vsm{1}$ will see that one of $\ttree_1', \atree_1'$ is updated and will never accept $\dstt{1}$ as a new local state digest. This relies on the soundness of merkle membership proofs of the $\gsc$ state against the rollup state digest. We consider two cases:
    
   \noindent\textbf{Case 1:} $\vsm{1}$ is the leader under $\idx$. Then $\vsm{2}$ must have successfully  executed $\textproc{PreCommit}$ and entered $\paired$ status under $\idx$ 
    (otherwise, $\vsm{1}$ will never see correct proof of $\vsm{2}$ being $\paired$, and will never decide to commit). Now consider the successful execution of $\textproc{Commit}$ on $\vsm{1}$, 
    leading to a $\commit$ decision. Since $\vsm{1}$ decides to commit, it must see that all three properties (I,II,III) are preserved under $(\dstt{1}, \dstt{2})$, a contradiction. 

    \noindent\textbf{Case 2:} $\vsm{1}$ is \emph{not} the leader under $\idx$, so $\vsm{2}$ is the leader. Since $\vsm{1}$ accepts new non-local state digest ($\dstt{1}$), it must accept it after successful completion of $\textproc{Commit}$. Since $\vsm{1}$ is not the leader, it must have seen correct proof that $\vsm{2}.\decisions[\idx] = \commit$. For $\vsm{2}$ to decide to commit, it must have also successfully executed $\textproc{Commit}$ and committed. Again, it must be that all properties (I,II,III) are preserved under $(\dstt{1}, \dstt{2})$, a contradiction.  

\end{proof}

Note that the proof of \cref{thm:chain-crate-forward-atomicity-two-rollups} relies on the soundness of bridge proofs for the correctness of the L1 state digest.

\subsection{Latency and Liveness}~\label{sec:liveness-chain-crt}
$\pname$ requires two L1 transactions per rollup in sequential order: $\vsm{1}.\textproc{updateDigest}$, $\vsm{2}.\textproc{updateDigest}$, $\vsm{1}.\textproc{Commit}$, and $\vsm{2}.\textproc{Commit}$, with a similar order for aborts. This ensures an end-to-end latency of $4$ rounds, satisfying efficiency with respect to \cref{def:efficiency}. $\pname$ satisfies \emph{liveness}, as the leader VSM can always exit $\paired$ status by committing or aborting, while the non-leader VSM depends on the leader’s decision. Since both rollups are assumed live, this dependency does not cause deadlocks.

\section{\pname: Supporting DAG CRTs}~\label{sec:dag-crt-protocol}
In this section, we
augment $\pname$ to support the more expressive DAG CRT programming model. We only modify the XEVM component of CRATE and leave the 2PC component on L1 unchanged; as a result, the liveness argument is identical to the chain-CRT version.

\parhead{General System Contract}
The chain-CRT-based $\gsc$ contract does not support the more expressive DAG CRTs, since it crucially relies on marking the session as inactive upon encountering a non-triggering action.
To support DAG CRTs, we modify XEVM as follows. 

First, $\action(\cdot)$ now accepts an \emph{array} of triggered actions, instead of a single action. Correspondingly, we modify $\trigger(\cdot)$ to accept trigger calls even after the $\tcalled{}$ flag is set to $\true$. 
To force Executors to use a single $\action$ call, we insert logic so that all trigger (resp. action) $\msghash$ values created as part of the same parent transaction use the same $\tnonce{}$ (resp. $\anonce{}$) value. At this point, notice that in the chain-CRT $\gsc$, the $\textproc{startSession}$ function is designed to route a single $\trigger$ call, thus limiting $a_1$ to single trigger call. To allow $a_1$ to make multiple trigger calls (which can originate from distinct contracts, such as in a cross-rollup flash loan), we re-purpose $\textproc{startSession}$ to be the \emph{entrypoint} of the DAG CRT. We require that the user route her compiled $a_1'$ action through $\textproc{startSession}$ by passing $(a_1'.\addr{}, a_1'.\cdata{})$ as arguments to it. $\textproc{startSession}$ then executes $a_1'$ and the $\gsc$ handles all nested $\trigger$ calls as described above. We leave $\textproc{checkSessionID}$ and all session-tracking logic unchanged.
We present our full DAG-CRT $\gsc$ contract in \cref{apdx:gsc}.

\parhead{Executor specification}
The Executor now listens for multiple trigger events, and passes the array all triggered actions as input to $\gsc.\action(\cdot)$. We reflect this change in the modified $\textproc{xevm}$ function in \cref{alg:shared-executor-xevm-dag}.

\parhead{Safety}
We now state our second main theorem, which says that $\pname$ is secure with respect to \cref{def:forward-atom-chain-crt} repurposed for DAG-CRTs. We defer the theorem's proof to \cref{apdx:dag-crt-proofs}.

\begin{theorem}~\label{thm:dag-crate-forward-atomicity-two-rollups}
     $\pname$ satisfies atomicity for DAG CRTs and $|\mathcal{R}| = 2$ rollups.
\end{theorem}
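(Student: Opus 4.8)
The plan is to follow the same two-stage skeleton used in the proof of \cref{thm:chain-crate-forward-atomicity-two-rollups}, exploiting the fact that the 2PC component on L1 is left untouched by the DAG extension and only the XEVM/$\gsc$ layer changes. Stage one shows that if the off-chain execution of a DAG CRT $\dagcrt = [a_i]_{i \in [n]}$ violates either all-or-nothing or serializability, then at least one of the three $\gsc$-state predicates is broken under the produced pair $(\dstt{1}, \dstt{2})$: (I) $\ttree_1 = \atree_2$ and $\ttree_2 = \atree_1$; (II) $\snonce{1} = \snonce{2} = \enonce{1} + \enonce{2}$; (III) $\sactive{b} = \false$ for at least one $b \in \{1,2\}$. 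Stage two argues that, whenever some predicate among (I,II,III) fails, the leader VSM under the 2PC instance index $\idx$ never issues a $\commit$ decision, so $\vsm{1}$ and $\vsm{2}$ cannot both accept the faulty digests, contradicting the assumption that $\atomexec(\dagcrt, \cdot) = \false$.

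For stage one I would re-establish DAG analogues of \cref{claim:a-exec-iff-desc-inserted-chain} and of \cref{lem:svs-gsc-all-or-none-exec,lem:svs-gsc-honest-or-reverse-order,lem:chain-gsc-honest-order}, now interpreting $a.\desc{}$ and $a.\descnext{}$ as arrays. The analogue of \cref{claim:a-exec-iff-desc-inserted-chain} says that a non-local DAG action $a$ lies in $\batch{}$ iff every entry of the array $a.\desc{}$ was inserted into $\atree$ and every entry of $a.\descnext{}$ was inserted into $\ttree$; this still holds because the DAG $\gsc$ forces all $\msghash$ values produced by a single parent transaction to reuse one $\anonce{}$ (resp.\ one $\tnonce{}$) and inserts the whole bundle atomically alongside executing $a.\code{}$, so the bundle goes in or nothing does. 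Given this, the all-or-nothing lemma follows by the same contrapositive argument as \cref{lem:svs-gsc-all-or-none-exec}: picking consecutive $a_i, a_{i+1}$ on different rollups with disagreeing $\actexec{}$ values and using $a_i.\descnext{} = a_{i+1}.\desc{}$ as arrays, one rollup has the bundle in its trigger tree while the other lacks it in its action tree, so by Merkle-tree collision resistance (I) breaks. The two order lemmas carry over with the same structure: under only the root-equality check the Executor can at most fully reverse the order of the $a_i$'s, and the session-tracking logic (nonces and active flag), which is unchanged in the DAG $\gsc$, rules out even full reversal, since the leader VSM's check $\snonce{1} = \snonce{2} = \enonce{1} + \enonce{2}$ together with some $\sactive{b} = \false$ is inconsistent with a reversed batch. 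An inductive argument across all CRTs appearing in $\batch{1}$ and $\batch{2}$ then lifts this to the full batches.

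Stage two is essentially a copy of the 2PC case analysis from the proof of \cref{thm:chain-crate-forward-atomicity-two-rollups}: at least one of $\dstt{1}, \dstt{2}$ is non-local (say $\dstt{1}$), hence $\vsm{1}$ reaches it only via a successful $\textproc{PreCommit}$ followed by a successful $\textproc{Commit}$; whether $\vsm{1}$ is the 2PC leader under $\idx$ or not, reaching a $\commit$ decision requires the leader to have verified that (I,II,III) all hold under $(\dstt{1}, \dstt{2})$, contradicting stage one. This reuses the same soundness assumptions on validity proofs and on bridge membership proofs, and needs no change.

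I expect the DAG analogue of \cref{lem:chain-gsc-honest-order} to be the main obstacle. The DAG $\gsc$ trades the chain model's ``mark the session inactive on the first non-triggering action'' for a scheme that (a) lets $\trigger$ be called repeatedly inside one action, (b) forces all hashes of a parent transaction to reuse one nonce, and (c) repurposes $\startsession$ as the global entrypoint of $a_1$. One must argue carefully that these changes still force the Executor to route each triggered bundle through exactly one $\action$ call and in the order dictated by the $\descnext{}$ arrays, i.e., that no Executor can split a bundle across $\action$ calls, merge two bundles, or permute sub-actions without either altering the $\gsc$ trees (violating (I)) or corrupting the nonce bookkeeping (violating (II) or (III)). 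Carrying out this array-valued case analysis cleanly, together with the boundary cases -- $a_1$ entering through $\startsession$ and the terminal action with $|a_n.\descnext{}| = 0$ toggling $\sactive{}$ to $\false$ -- is where most of the work lies.
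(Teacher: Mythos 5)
Your proposal matches the paper's proof essentially step for step: the paper likewise repurposes \cref{claim:a-exec-iff-desc-inserted-chain} and \cref{lem:svs-gsc-all-or-none-exec,lem:svs-gsc-honest-or-reverse-order,lem:chain-gsc-honest-order} to array-valued $\desc{}$/$\descnext{}$ fields, proves the DAG all-or-nothing lemma via the same nonce-based case analysis you sketch (splitting a bundle across parent actions vs.\ merging it with extra sub-actions), and then reuses the two-stage 2PC argument of \cref{thm:chain-crate-forward-atomicity-two-rollups} verbatim. The only refinement needed is to state your claim analogue in the paper's strengthened form (``$a.\desc{}$ appears as a contiguous subsequence inserted under a single shared $\anonce{}$, and no other entry uses that nonce,'' its ``$\subseteq^*$'' relation) rather than plain element-wise membership, a strengthening your own remarks about shared $\anonce{}$/$\tnonce{}$ values and split or merged bundles already anticipate.
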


\section{Implementation and Evaluation}~\label{sec:implementation}

We implement $\pname$ on Ethereum networks using the Foundry toolkit~\cite{foundry-book}. We abstract away the (existing) SNARK validity proofs of zk-rollups and implement the additional logic introduced by $\pname$. We also implement our motivating application, the cross-rollup flash loan, on top of $\pname$. %

\subsection{Implementation details}

Our $\pname$ implementation~\footnote{\url{https://github.com/ikaklamanis/crate}} consists of three main components: the Executor which performs both the XEVM off-chain execution on L2 and drives the 2PC on L1 (implemented in 1,200+ lines of Python), the L2-based General System ($\gsc$) contract (implemented in 100+ lines of Solidity), and the L1-based Validator ($\vsm{}$) contract (implemented in 500+ lines of Solidity). We conducted our evaluation on a machine with 80 cores powered by an Intel(R) Xeon(R) Platinum 8380 CPU @ 2.30GHz, 125 GB of memory, and 208 GB of swap space.

\myparagraph{Rollup architecture}
We use a simple rollup architecture with two rollups. Each rollup consists of an L2 chain instantiated as a Foundry~\cite{foundry-book} EVM chain, whose state is checkpointed on an L1 chain, also instantiated as a Foundry EVM chain. We deploy a copy of the $\gsc$ contract on the L2 chain, and a copy of the $\vsm{}$ contract on the L1 chain. 

\myparagraph{Executor}
We implement the Executor in Python as a standalone entity consisting of two modules. First, the XEVM module implements the Shared Executor \cref{alg:shared-executor-xevm-chain-crt}. 
Second, the 2PC module implements the Executor's role in the 2PC protocol; it fetches the state roots of both rollups, as well as membership proofs for the $\gsc$ state, using the Ethereum RPC API~\cite{eth-rpc-api}. 
We simulate the generation of the user transaction by including it as part of the Executor implementation. The Executor also simulates the bridge between the two L1s, by relaying block headers to the two VSM contracts. 

\myparagraph{$\vsm{}$ contract}
The $\vsm{}$ contract receives new state digests and follows the 2PC protocol to commit or abort. Our implementation executes the full 2PC protocol, which includes verifying MPT membership proofs that are part of the pre-commit and commit evidence. To optimize gas efficiency, we batch membership proofs where possible. We provide and evaluate two implementations for MPT verification. In the first implementation (``CRATE-MPT''), we use a smart contract~\cite{solidity-mpt} which directly verifies membership proofs on-chain; in the second one (``CRATE-SNARK''), we use Groth16~\cite{groth-16} SNARKs  and the Circom~\cite{circom-site} framework. %

\newcommand{\token}{\mathsf{xToken}}
\newcommand{\flpool}{\mathsf{xFlashLoan}}
\newcommand{\userfl}{\mathsf{xUserFL}}

\parhead{Flash Loan Application}
We implement a flexible cross-rollup flash loan infrastructure in less than 300 lines of Solidity, carefully leveraging the more expressive DAG-CRT model of $\pname$. The implementation consists of two ``service'' contracts: the $\token$ contract, which facilitates token transfers and the \emph{cross-rollup} burn-then-mint service, and the $\flpool$ contract, which supports the \emph{cross-rollup} flash loan service.
Third, we implement the $\userfl$ contract, which is executes the user-desired actions during the cross-rollup flash loan. We deploy a copy of each of the three contracts on each rollup L2 chain.
The reader can refer to \cref{apdx:xfl} for the Solidity implementation of these contracts.

\subsection{Evaluation}

\newcommand{\batchsize}{\mathsf{batchSize}}
\newcommand{\numbatches}{\mathsf{numBatches}}

\parhead{L1 Gas Usage}

Compared to existing zk-rollups, $\pname$ introduces an extra L1 transaction for finalizing the state digest and additional overhead for verifying pre-commit and commit evidence. We measure this extra L1 gas usage and compare it to Zksync Era~\cite{zksync-era}, which finalizes an L2 batch using three L1 transactions~\cite{quarkslab-zksync-workflow}. Based on data from Etherscan~\cite{etherscan} and Zksync Explorer~\cite{zksync-explorer}, the L1 gas usage of Zksync is in the range $0.9$M -- $2.1$M gas during the period June-December 2024. Since zk-rollups already incur batch submission costs, $\pname$’s evaluation excludes batch size considerations.

We define a simple workflow where the Executor receives transactions, which are grouped into batches, each containing at least one CRT. For each batch, (1) the XEVM module executes all transactions on L2, producing new state digests, and (2) the 2PC module drives the 2PC protocol between the two VSMs. We run this workflow for $100$ instances of 2PC and compute the average L1 gas usage of $\pname$. Using the MPT verifier contract, the average gas usage of $\pname$ over all 2PC instances is $1.35$M gas, with approximately $840$K gas for pre-commit and $510$K gas for commit. Thus the MPT-based $\pname$ incurs a $\mathbf{0.64-1.5}\times$ increase in gas usage compared to the referenced gas usage range of Zksync Era, which we illustrate in \cref{fig:2pc-gas-usage-increase}.

\begin{figure}
    \centering
    \includegraphics[scale=0.4]{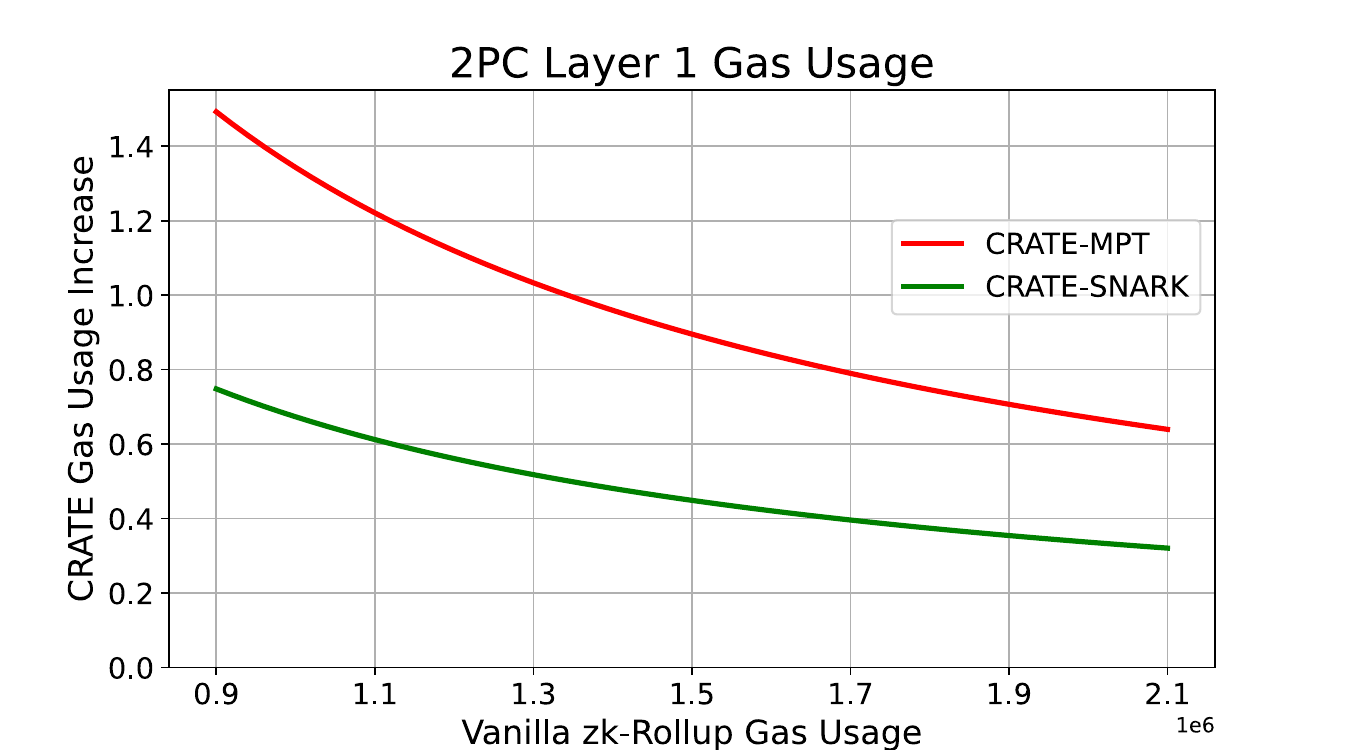}
    \caption{L1 gas usage increase incurred by the MPT-based (red) and SNARK-based (green) implementation of the $\pname$ 2PC protocol, compared to the observed gas usage range of a vanilla zk-rollup (Zksync). }
    \label{fig:2pc-gas-usage-increase}
\end{figure}

\parhead{Using SNARKs instead of native MPT proofs} To further reduce L1 gas usage during 2PC, we replace Merkle membership proofs with SNARK proofs. This allows us to batch the unrelated $\locevd$ and $\pcevd$ membership proofs of the pre-commit round into a single SNARK. Under this implementation, the average gas usage of $\pname$ over all 2PC instances is $670$K gas, with approximately $340$K gas for pre-commit and $330$K gas for commit. Thus the SNARK-based $\pname$ incurs a $\mathbf{0.32-0.75}\times$ increase in gas usage compared to the observed gas usage range of Zksync Era, which we also illustrate in \cref{fig:2pc-gas-usage-increase}.
We note that in practice, a zk-rollup integrating $\pname$ can further reduce gas usage by combining the existing state transition validity SNARK with $\pname$'s pre-commit SNARK into a single SNARK.

Since our experiments above are conducted on local Foundry chains, the depth of our MPT tries is $d = 4$, which is smaller than the depth observed on Ethereum Mainnet account MPT tries ($d=11$ at the time of writing). Larger MPT depths imply larger SNARK circuits, which in turn increase proof generation time. We use the term ``single'' MPT proof to mean a proof for a single MPT trie. To prove membership of a smart contract variable against the chain's state root, we need a ``full'' MPT proof consisting of \emph{two single} MPT proofs; one for the account trie and one for the storage trie. Under this terminology, the SNARK circuit of the pre-commit round contains \emph{two full} MPT proofs (one for $\locevd$ and one for $\pcevd$); the SNARK circuit of the commit round contains \emph{one full} MPT proof (for $\cevd$).

We separately measure the SNARK generation time of a \emph{single} MPT proof for different values of $d \in [2,16]$. For $d=16$, the proof generation time is $130$ seconds (with CPU) and $60$ seconds (with GPU). We use an AWS EC2 g4dn.xlarge instance for both CPU and GPU performance benchmarks. The instance has 4 Intel Xeon Platinum 8259CL CPUs @ 3.095 GHz, 16 GB memory, and 1 NVIDIA Tesla T4 GPU core. Since $\pname$ uses $4$ single MPT proofs for the pre-commit round, we estimate that the SNARK proof generation time of the pre-commit round (for $d=16$) is $4 \times 130 = 520$ seconds (i.e., $8.7$ minutes) with CPU, and $4 \times 60 = 240$ seconds (i.e., $4$ minutes) with GPU. Similarly, the SNARK proof generation time of the commit round is estimated to be $4.4$ minutes with CPU and $2$ minutes with GPU. Note that the \emph{proof size, verification time, and L1 gas usage remain the same} regardless of the MPT depth.

\parhead{Flash Loan application}
We measure the total gas usage of a flash loan CRT across both rollups to be $550$K gas. For reference, single-chain flash loans are reported~\cite{equalizer-fl-cost} to consume $200$K gas on average. This increase is due to the overhead of the six trigger-action calls made by our cross-rollup flash loan. Regardless, our flash loan is reasonably practical, especially considering the significantly lower L2 gas costs.

\section{Discussion}~\label{sec:discussion}
In this section, we discuss extensions to the $\pname$ protocol and implications of our design decisions.

\parhead{Full Serializability}
\emph{Full} serializability requires that no local action can interleave between two CRT actions within a rollup batch, in addition to the relative ordering already required by \emph{weak} serializability. We can easily modify the $\gsc$ contract to support full serializability as follows. If $\action$ is ready to declare a session as inactive, then, before returning, it triggers a new $\textproc{finishSession}$ on the other $\gsc$. This ensures that both GSCs mark the current session as inactive, which was not the case in \cref{alg:gsc-single-trigger}. Further, we require that \emph{all} rollup transactions (including local ones) use $\startsession$ as their entrypoint. Then $\startsession$ can revert any transaction that attempts to execute during an ongoing CRT session.

Unlike L1 protocols with fine-grained state locks~\cite{fal-tccsci-23}, our approach locks the entire rollup state during the CRT session, simplifying implementation. Transactions in the batch can still access the full state outside the CRT session. However, this requires routing all transactions through the $\gsc$, making it not backwards-compatible with existing applications.

\parhead{Distinct Executors}
For ease of exposition, we have described $\pname$ under a shared Executor operating both rollups. However, $\pname$ can easily be adapted to support \emph{distinct}, possibly \emph{distrusting} Executors $E_1$ and $E_2$. Upon listening to a trigger call emitted by $\gsc_1$, $E_1$ sends the triggered action information to $E_2$, who proceeds to execute a matching $\action$. $E_1$ can set proper timeouts so that he does not wait an indefinite amount of time for $E_2$ (and vice-versa). Since $E_1$ and $E_2$ need to communicate back and forth to execute CRTs, the network latency becomes the bottleneck during off-chain execution. Nevertheless, this drawback is not specific to $\pname$ but inherent to the distinct Executors setup.

\parhead{Implications on rollup finality}
Recall from \cref{sec:liveness-chain-crt} that the end-to-end latency of $\pname$ is $4$ rounds. In Ethereum, one round lasts $13$ minutes, so rollups using $\pname$ achieve finality after $4 \times 13 = 52$ minutes; this might seem much longer than the single round ($13$ minutes) required by existing zk-rollups for finality. However, most zk-rollups currently submit new state roots to L1 at intervals in the order of \emph{hours}, since SNARK proof generation is the bottleneck. Thus, both the $4\times$ latency and the $\vsm{}$ state locking introduced by $\pname$ do not affect existing rollups' finality and throughput, since the $\vsm{}$s were already not utilized during that time for any new state updates. Similarly, the increased proving time incurred by our SNARK-based implementation is also dominated by the SNARK validity proof generation time required by existing zk-rollups.

Note that the discussion above is about L1 \emph{finality}, which does not place any trust on the Executor. Today, applications may also be interested in instant confirmation of rollup transactions~\cite{zksync-instant-confirmations}. Since the Executor needs to be trusted for such faster confirmations, $\pname$ preserves the underlying zk-rollups' instant confirmation guarantees.

\section{Related Work}~\label{sec:relwork}
In this section, we compare $\pname$ to existing cross-chain systems. The state-of-the-art cross-rollup composability solution is SVS~\cite{shared-val-seq-23} and variants, such as the CIRC protocol~\cite{espresso-circ-24} by Espresso. We provide a comprehensive comparison of $\pname$ with SVS in~\cref{sec:overview}.

\myparagraph{Cross-chain Interoperability (IO)} Existing cross-chain IO~\cite{chainlink-ccip} protocols primarily adopt the framework of atomic swaps and cross-chain bridges. Atomic swap protocols usually make use of smart contracts, hashlocks, and timelocks~\cite{bitcoin-wiki-htlc,wadhwaHeHTLC2023,tsabaryMADHTLC2021} to ensure the atomic execution of cross-chain transactions. Cross chain bridges~\cite{xiew-zkbridge-22, axelar, near-bridge} connect two blockchains and facilitate the transfer of assets between them. While they enable services such as asset transfers, they cannot offer full cross-chain composability, since they lack a coordination mechanism. 
Current IO protocols~\cite{alt-chains-atomic-transfers, bitcoin-wiki-htlc, herlihy-atomic-cc-swaps-18, herlihy-cc-smr-22, sheff-het-paxos-20, anoma-chimera-chains-23} cannot accommodate complex cross-chain transactions such as flash loans without relying on strong trust assumptions.

\myparagraph{Cross-chain Composability}
Several solutions~\cite{lu-atomic-cc-interactions-24, fal-tccsci-23, zakhary-ac3-20} ensure atomicity for multi-L1 transactions, such as the 2PC4BC protocol~\cite{fal-tccsci-23}, which provides atomicity and serializability but requires $O(n)$ rounds for a transaction of length $n$. These L1 protocols rely on complex locking mechanisms and multiple rounds, making certain applications, like cross-chain flash loans, impractical. In contrast, $\pname$ operates on L2 and requires $O(1)$ rounds by executing transactions off-chain and including them in a single rollup batch. 
For instance, 2PC4BC requires at least 5 L1 rounds ($\sim60$ blocks) for flash loan execution and finality, whereas $\pname$ performs execution off-chain, with L1 involvement only needed for finality.

Shared sequencer networks like Espresso~\cite{espresso-docs}, Astria~\cite{astria-docs}, and Radius~\cite{radius-docs} leverage a shared sequencer to achieve censorship resistance and faster pre-confirmation. While these solutions enable certain types of cross-rollup transactions, their guarantees are limited to atomic inclusion (not execution) and rely on trusting the shared sequencer.

The Superchain~\cite{superchain-explainer} is a network of layer 2 chains built on the OP Stack~\cite{optimism-website}, enabling resource sharing and asynchronous messaging but relying on a shared sequencer for cross-chain atomicity. Polygon AggLayer~\cite{agglayer} connects layer 1 and layer 2 networks via a unified bridge, supporting token transfers and message passing, but its cross-chain transactions require Ethereum settlement, leading to high asynchrony.

\section{Conclusion}~\label{sec:conclusion}
We have introduced $\pname$, a secure protocol for cross-rollup composability, enabling users to atomically execute transactions spanning two rollups, guaranteeing the all-or-nothing and serializability properties. Our protocol is flexible since it supports rollups on distinct L1s, as well as rollups operated by distinct Executors. We provided two formal models for cross-rollup transactions (CRTs), and defined atomicity within them. Our novel formal treatment gave rise to the shortcomings of other works and allowed us to rigorously prove security for $\pname$. We implemented and benchmarked all components of $\pname$ and an end-to-end cross-rollup flash loan application to demonstrate the practicality of $\pname$.

\section*{Acknowledgements}
We want to thank Fangyan Shi and Wenhao Wang for their help with the SNARK-based implementation of MPT verification, as well as Sen Yang for helpful discussions.

\bibliographystyle{plain}
\bibliography{ref,fanz}

\appendix
\newpage

\section{General System Contract}~\label{apdx:gsc}
 \cref{alg:gsc-mult-triggers} contains the pseudocode for the General System Contract supporting DAG-CRTs. Modifications on top of the chain-CRT version are highlighted with \textcolor{dag}{brown color}.

\begin{algorithm}
    \small
    \caption{General System Contract: DAG-CRT}
    \label{alg:gsc-mult-triggers}
    \begin{algorithmic}[1]
    \Statex
    \Function{\textcolor{chain}{startSession}}{\textcolor{chain}{$\addr{}, \cdata$}}
            \State \textcolor{chain}{ \require ($\Not \; \acalled{}$) }
            \State \textcolor{chain}{ $\enonce{} \gets \enonce{} + 1$ }
            \State \textcolor{chain}{ $\snonce{} \gets \snonce{} + 1$ }
            \State \textcolor{chain}{ $\sactive{} \gets \true$ }
            \State \textcolor{chain}{ $\tcalled{} \gets \false$ }
            \State \textcolor{dag}{ (status, \_) $\gets \addr{}.\call(\cdata)$ }
            \State \textcolor{dag}{ \require (status) }
            \State \textcolor{dag}{ \require ($\tcalled{}$) }
    \EndFunction
    \Statex
    \Function{trigger}{$\addr{}, \cdata$}
        \If{\textcolor{dag}{ ($\Not \; \tcalled{}$) } }
            \State \textcolor{dag}{$\tnonce{} \gets \tnonce{} + 1$} {\Comment{Allow many triggers per tx}}
        \EndIf
        \State \textcolor{chain}{$\tcalled{} \gets \true$}
        \State \textcolor{chain}{$\sid{} \gets \snonce{}$}
        \State $ \msghash \gets \hash(\msgsender, \addr{}, \cdata, \tnonce{}, \textcolor{chain}{\sid{}})$
        \State \textbf{emit T}($\msgsender, \addr{}, \cdata, \tnonce{}$)
        \State $\ttree.\insertt(\msghash)$
    \EndFunction
    \Statex
    
    \Function{action}{\textcolor{dag}{$\tdatalist{}$}, \textcolor{chain}{$\sid{}$}}
            \State \textcolor{chain}{\Call{checkSessionId}{$\sid{}$}}
            \State \textcolor{chain}{$\acalled{} \gets \true$}
            \State \textcolor{chain}{$\tcalled{} \gets \false$}
            \State \textcolor{dag}{ \require ($\mathsf{len}(\tdatalist{}) > 0$) }
            \State \textcolor{dag}{ $\anonce{} \gets \anonce{} + 1$ }
            \For{\textcolor{dag}{ trigData : trigDataList } }
                \State \textcolor{dag}{(sender, caddr, cdata) $\gets$ trigData }
                \State $\xsender{} \gets \sender$
                \State (status, \_) $\gets \addr{}.\call(\cdata)$
                \State \require (status)
                \State $ \msghash \gets \hash(\sender, \addr{}, \cdata, \anonce{}, \textcolor{chain}{\sid{}}) $
                \State $\ttree.\insertt(\msghash)$
                \State $\xsender{} \gets \texttt{address}(0)$
            \EndFor
            \If{\textcolor{chain}{ ($\Not \; \tcalled{}$) } }
                \State \textcolor{chain}{ $\sactive{} \gets \false$ }
            \EndIf
            \State \textcolor{chain}{$\acalled{} \gets \false$}
    \EndFunction
    \end{algorithmic}
\end{algorithm}

\section{Validator Smart Contract}~\label{apdx:vsm}
\cref{alg:vsm-trustless-p1} fully specifies the functions used in \cref{alg:vsm-generic}. Here, we assume the existence of two functions for abstraction purposes: $\textproc{VerAttributesWithBridge}(\cdot)$, which verifies state attributes against bridge-relayed block headers, and $\textproc{VerAttributes}(\cdot)$, which verifies the MPT membership of state attributes against rollup state roots (no bridge involved).  

\algblock{Variables}{EndVariables}

\begin{algorithm}[H]
    \small
    \caption{Validator Smart Contract: Concrete 2PC}
    \label{alg:vsm-trustless-p1}
    \begin{algorithmic}[1] %

        \Variables
            \State \{$\vsmid, \dst{}, \idx, \dstemp{} $\}; $\pstat \in \{\free, \paired \};$
            \State \{$\troot{}', \aroot{}', \snonce{}', \enonce{}'$\};
            \State $\decisions$: mapping $\idx \Rightarrow \commit/\abort$
        \EndVariables
        
        \Statex
        \Function{IsLocal}{$\dstt{}, \locevd$}
            \State $(\roots, \nonces, \attproofs) \gets \locevd$
            \State $\textproc{VerAttributes}(\dst{}, \dstt{}, \roots, \nonces, \attproofs)$
            \State $\troot{}, \aroot{},  \troot{}', \aroot{}' \gets \roots$
            \State $\snonce{}', \enonce{}' \gets \nonces$
            \State Store $(\troot{}', \aroot{}', \snonce{}', \enonce{}')$
            \State \Return $(\troot{} = \troot{}') \And (\aroot{} = \aroot{}')$
        \EndFunction
        
        \Statex
        \Function{VerPreComEvd}{$\dstt{}, \pcevd$}
            \State $(\ovsm, \attproofs, \dstt{O}) \gets \pcevd$
            \State \Call{VerAttributesWithBridge}{$\ovsm, \attproofs$}
            \State $\idx \gets \Call{GetIndex}{\dst{}, \dstt{}', \ovsm.\dst{}, \dstt{O}}$
            \State $\leader \gets \Call{GetLeader}{\idx}$
            \If{$\leader = \vsmid$} \; \require $\ovsm.\pstat = \free $
            \Else
                \State \require $\ovsm.\pstat = \paired $ 
                \State \require $\ovsm.\idx = \idx$
            \EndIf
            
        \EndFunction    

        \Statex

        \Function{VerComEvd}{$\cevd$}
            \State $(\ovsm, \attproofs) \gets \cevd$
            \State \Call{VerAttributesWithBridge}{$\ovsm, \attproofs$}
            \State $\leader \gets \Call{GetLeader}{\idx}$
            \If{$\leader = \vsmid$}
                \State \require $\ovsm.\pstat = \paired$ 
                \State \require $\ovsm.\idx = \idx$
                \State \require $\troot{}' = \ovsm.\aroot{}'$
                \State \require $\aroot{}' = \ovsm.\troot{}'$
                \State \require $\enonce{}' + \ovsm.\enonce{}' = \snonce{}'$
            \Else \; \require $\ovsm.\decisions[\idx] = \commit$
            \EndIf
            \State \Return $\true$
        \EndFunction
        \Statex
        \Function{VerAbEvd}{$\dstt{}, \abevd$}
            \State $(\ovsm, \attproofs) \gets \abevd$
            \State \Call{VerAttributesWithBridge}{$\ovsm, \attproofs$}
            \State $\leader \gets \Call{GetLeader}{\idx}$
            \If{$\leader = \vsmid$}
                \If{$\big( (\ovsm.\pstat = \paired)$ $\And$  $(\ovsm.\idx = \idx) \big)$}
                    \State \require {$(\troot{}', \aroot{}') \neq (\ovsm.\aroot{}', \ovsm.\troot{}')$}
                \EndIf
                \State $\decisions[\idx] \gets \abort$
            \Else
                \State \require ($\ovsm.\decisions[\idx] = \abort$)
            \EndIf
        \EndFunction

    \end{algorithmic}
\end{algorithm}

\section{XEVM: Executor Algorithm}
\begin{algorithm}[H]
    \small
    \caption{Shared Executor: XEVM -- Chain CRTs}
    \label{alg:shared-executor-xevm-chain-crt}

    \begin{algorithmic}[1]
    \Variables
        \State $\rolids := \{1,2\}$: rollup ids
        \State $\state{1}, \state{2}$: states of rollups $\rol{1}, \rol{2}$
        \State $\cp{1}, \cp{2}$: state checkpoints
        \State $\gsc_1, \gsc_2$: General System contracts
    \EndVariables
    \Statex
    \Function{xevm}{$\tx{}$}
        \State $(\status{1}, \outt{}) \gets $ \Call{evm}{$\tx{}$}
        \State $\tx' \gets $ parse($\outt{}$)
        \If{$\tx' = \nul$} \Return $\status{1}$ \EndIf
        \State \textbf{Assert} $(\tx'.\rol{} \in \rolids) \And (\tx'.\rol{} \neq \tx.\rol{})$
        \State $\tx'' \gets \gsc_{\tx'.\rol{}}.\action(\langle \tx' \rangle)$
        \State $\status{2} \gets$ \Call{xevm}{$\tx''$}
        \State \Return $\status{2}$
    \EndFunction
    \Statex
    \Function{entryPoint}{$\tx{}$}
        \For{$\id : \rolids$}
            \State $\cp{\id} \gets \state{\id}$
        \EndFor
        \State $\status{} \gets $ \Call{xevm}{$\tx{}$}
        \If{$(\status{} = \false $}
            \For{$\id : \rolids$}
                \State $\state{\id} \gets \cp{\id}$
            \EndFor
        \EndIf
        
    \EndFunction
\end{algorithmic}
    
\end{algorithm}

\newcommand{\txlist}{\mathsf{txList}}

\begin{algorithm}[ht]
    \small
    \caption{DAG-\pname \; Shared Executor: XEVM}
    \label{alg:shared-executor-xevm-dag}

    \begin{algorithmic}[1]
    \Function{xevm}{$\tx{}$}
        \State $(\status{1}, \outt{}) \gets $ \Call{evm}{$\tx{}$}
        \State \textcolor{dag}{ $\txlist \gets $ parse($\outt{}$) }
        \If{\textcolor{dag}{ $|\txlist| = 0$} } \Return $\status{1}$ \EndIf
        \State \textcolor{dag}{ $\rol{}' \gets \txlist'.\rol{}$ }
        \State \textbf{Assert} $(\rol{}' \in \rolids) \And (\rol{}' \neq \tx.\rol{}) $
        \State $\tx'' \gets \gsc_{\rol{}'}.\action(\textcolor{dag}{[\langle \tx' \rangle]_{\tx' \in \txlist} })$
        \State $\status{2} \gets$ \Call{xevm}{$\tx''$}
        \State \Return $\status{2}$
    \EndFunction
\end{algorithmic}
    
\end{algorithm}

\section{Chain-CRT Deferred Proofs}~\label{apdx:chain-crt-proofs}

For an executed action $a$, we say $a$ is \emph{action-wrapped} if $\act{}(a.\desc{})$ is in fact executed -- instead of plain $a$. 
Similarly, we say $a$ is \emph{triggering} if $a.\descnext{} \neq \nul$, i.e., it makes at least one call to $\gsc.\trig(\cdot)$. Given a CRT $\ccrt = [a_i]$, notice that $a_1$ is the only CRT action that is \emph{not} action-wrapped. This is because $a_1$ is the user-signed entrypoint action and is executed as is, thus its $\desc{}$ field is \emph{not} inserted in the action tree. Similarly, $a_n$ is the only non-triggering action of the CRT.

\begin{proof}[Proof of \cref{lem:svs-gsc-honest-or-reverse-order}]
    The ``only if'' direction corresponds to the honest execution; its proof is simple and omitted. We focus on the ``if'' direction. Without loss of generality, assume $n = 2k$ is even and $a_1$ is executed on $\rol{1}$. This means that the last action $a_{2k}$ is executed on $\rol{2}$. 
    Let $[i_1, \dots, i_k]$ denote the relative order of actions $\{a_{2i-1}\}_{i \in [k]}$ in $\batch{1}$, and similarly let $[j_1, \dots, j_k]$ be the relative order of actions $\{a_{2j}\}_{j \in [k]}$ in $\batch{2}$. Let $f \in [k]$ (for ``first'') be the position of the first action, i.e, $i_f = 1$, and let $l \in [k]$ (for ``last'') denote the position of the last action, i.e., $j_l = 2k$.

    Since the last action is on $\rol{2}$, all $\rol{1}$ actions $[a_{i_r}]_{r \in [k]}$ trigger another action, so the (non-null) descriptions $[a_{i_r}.\descnext{}]_{r \in [k]}$ are inserted into $\ttree_1$ in this order. Similarly, since the first action is on $\rol{1}$, all $\rol{2}$ actions are action-wrapped, so the the (non-null) descriptions $[a_{j_r}.\desc{}]_{r \in [k]}$ are inserted into $\atree_2$ in this order. Thus we must have $j_r = i_r + 1$ for all $r \in [k]$, since otherwise $\ttree_1' \neq \atree_2'$. 
    
    \noindent\textbf{Case 1:} $l = 1$. Then each $\rol{2}$ action except for $a_{j_1} = a_{2k}$ triggers another action, so the (non-null) descriptions $[a_{j_r}.\descnext{}]_{2 \leq r  \leq k}$ are inserted in $\ttree_2$ in that order. Similarly, each $\rol{1}$ action except for $a_{i_f}$ action-wrapped, so the (non-null) descriptions $[a_{i_r}.\desc{}]_{r \neq f}$ are inserted in $\atree_1$ in that order. In order to have $\ttree_2' = \atree_1'$, it must be $i_r = j_{r+1} + 1$ for all $r < f$ and $i_r = j_r$ for all $r > f$. If $f < k$, then we have both $i_{k} = j_{k}$ and $j_{k} = i_{k} + 1$, which is impossible; thus it must be $f = k$. Combining we get $i_{r+1} = i_r - 2$ and $j_{r+1} = j_r - 2$ which together with $l=1$ lead to closed forms $i_r = 2(j-r) + 1$ and $j_r = 2(j-r) + 2$. This means $[i_r]_{r \in [k]}$ (resp. $[j_r]_{r \in [k]}$) is exactly the \emph{reverse} of the honest order $[2i - 1]_{i \in [k]}$ (resp. $[2i]_{i \in [k]}$).
    
    \noindent\textbf{Case 2:} $l \geq 2$. Similar to case 1, the (non-null) descriptions $[a_{j_r}.\descnext{}]_{r \neq l}$ are inserted in $\ttree_2$ in that order, and the (non-null) descriptions $[a_{i_r}.\desc{}]_{r \neq f}$ are inserted in $\atree_1$ in that order. If $f \geq 2$, and since $l \geq 2$, we have both $i_{1} = j_{1}$ and $j_{1} = i_{1} + 1$, which is impossible; thus it must be $f = 1$, i.e., $a_{i_1} := a_1$. Similarly, if $l < k$, we reach the impossibility of $i_{k} = j_{k}$ and $j_{k} = i_{k} + 1$; thus it must be $l = k$. Combining yields closed forms $i_r = 2r - 1$ and $j_r = 2r$, i.e., both $[i_r]_{r \in [k]}$ and $[j_r]_{r \in [k]}$ precisely match the honest order.
\end{proof}

\begin{proof}[Proof of \cref{lem:chain-gsc-honest-order}]
    The ``only if'' direction corresponds to the honest execution; its proof is simple and thus omitted. We prove the contrapositive of the ``if'' direction, i.e., we begin by assuming $\batch{1}$ and $\batch{2}$ \emph{do not} preserve the relative order of actions, and we wish to show that at least one of (I,II,III) is not preserved. 
    Similar to the proof of \cref{lem:svs-gsc-honest-or-reverse-order}, assume $n = 2k$ is even and $a_1$ is executed on $\rol{1}$. This means that the last action $a_{2k}$ is executed on $\rol{2}$. If the actions' order is \emph{not} fully reversed in the two batches, then \cref{lem:svs-gsc-honest-or-reverse-order} implies that (I) is not preserved, so we are done. For the rest of the proof, we assume the actions' order is fully reversed in the two batches. Given that (III) holds under $(\dst{1}, \dst{2})$, we have two cases:
    
    \noindent\textbf{Case 1:} $\sactive{1} = \false$. The first CRT action from $\batch{1}$ to be executed is $a_{2k-1}$, which is action-wrapped. Notice that it must use $\sid{1} = \snonce{1} + 1$; if it uses $\sid{1} = \snonce{1}$, then $\textproc{checkSessionID}$ will fail since $\sactive{1} = \false$. Next, $\textproc{checkSessionID}$ will set $\snonce{1}' = \snonce{1} + 1$ and $\sactive{1}' = \true$. Since $\sid{1}$ is part of the tuple inserted in $\atree_1$, all other action-wrapped actions must be using $\sid{1}$ as their session id, too; if not, then the trees will not match and (I) is not preserved. After executing all actions except $a_1$, we have that $\snonce{1}' = \snonce{1} + 1$ and $\snonce{2}' = \snonce{2} + 1$. We now consider the execution of $a_1$ on $\rol{1}$, which is the last CRT action in $\batch{1}$.
    \noindent\textbf{Subcase 1(a):} The user action $a_1$ is \emph{proper} and calls $\textproc{startSession}$ to trigger $a_2$. Then $\enonce{1}' = \enonce{1} + 1$ and \emph{further} the session nonce now is $\snonce{1}' = \snonce{1} + 2$. This new session nonce is inserted in $\ttree_1$, which causes a mismatch with $\atree_2$, where $a_2$ was inserted along with $\snonce{2}' = \snonce{1} + 1$. Thus (I) is not preserved.
    \noindent\textbf{Subcase 1(b):} The user action $a_1$ is \emph{improper} and directly calls $\textproc{trigger}$ to trigger $a_2$. Then $\enonce{1}' = \enonce{1}$ is unchanged and still $\snonce{}' = \snonce{} + 1$. Thus we have $\enonce{1}' + \enonce{2}' \neq \snonce{1}' $ and (II) is not preserved.
    
    \noindent\textbf{Case 2:} $\sactive{1} = \true$. Then $\sactive{2} = \false$ and now $a_{2k}$ must use $\sid{2} = \snonce{2} + 1$. The argument from here on is similar to case 1.

    In either case, either property (I) or (II) is not preserved.
\end{proof}

\section{DAG-CRT Deferred Proofs}~\label{apdx:dag-crt-proofs}
\noindent\textbf{Notation.} We use the following terminology and notation:
\begin{itemize}[leftmargin=*]
    \item For an action $a$ and sub-action $sa$, if $sa.\desc{} \in a.\desc{}$, we say that $a$ is the \emph{parent} action of $sa$.   
    \item We use $sa.\rol{}$ to denote the rollup on which a sub-action $sa$ is to be executed, which is the same as the rollup field of the parent action.
    \item We use $sa \in a$ to mean $sa.\desc{} \in a.\desc{}$, as well as $a[j] = sa$ to mean $a.\desc{j} = sa.\desc{}$.
    \item We write $a.\desc{} \subseteq \atree$ to mean that $a.\desc{} = [sa_j.\desc{}]_{j \in [|a.\desc{}|]}$ appears as a contiguous subsequence in the array maintained by $\atree$. We use similar notation for $\descnext{}$ and $\ttree$.  
\end{itemize} 

We also use the notation ``$a.\desc{} \subseteq^* \atree$'' to mean that $a.\desc{} \subseteq \atree$ and all $a.\desc{}[j]$ -- and only them --  are inserted with the same $\anonce{}$. The same goes for ``$a.\descnext{} \subseteq^* \ttree$''.

\cref{claim:a-exec-iff-desc-inserted-dag} conveys the same core idea as \cref{claim:a-exec-iff-desc-inserted-chain}, but the argument becomes more intricate due to the complexity of DAG-CRTs.

\begin{claim}~\label{claim:a-exec-iff-desc-inserted-dag}
    Let $\ttree$ and $\atree$ be the trigger and action trees of $\gsc$ after the transaction batch $\batch{}$ is accepted by the $\vsm{}$ of a rollup, and let $a$ be a \emph{non-local} action. Then $a \in \batch{}$ if and only if $a.\desc{} \subseteq^* \atree$ and $a.\descnext{} \subseteq^* \ttree$.~\footnote{This statement also captures the edge cases where either $a$ is not triggering or is not action-wrapped; e.g. if $|a.\desc{}| = 0$, then trivially $a.\desc{} = [] \subseteq \atree$.}
\end{claim}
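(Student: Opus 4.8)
The plan is to mirror the structure of the proof of Claim \ref{claim:a-exec-iff-desc-inserted-chain} but to be careful about two new features of the DAG-CRT $\gsc$ contract (Algorithm \ref{alg:gsc-mult-triggers}): (i) a single $\action(\cdot)$ call now processes an \emph{array} $\tdatalist{}$ of triggered sub-actions in a loop, inserting each one into the trees, and (ii) the nonce-coupling logic forces all of the $\msghash$ values created inside one parent transaction to share the same $\anonce{}$ (resp.\ $\tnonce{}$). So the two directions of the ``iff'' are: \textbf{($\Rightarrow$)} if $a \in \batch{}$, then executing $a$ as $\act{}(a.\desc{})$ runs the $\action$ loop, which inserts exactly the descriptions $[sa_j.\desc{}]_j = a.\desc{}$ into $\atree$ as a contiguous block all tagged with one freshly-incremented $\anonce{}$, and the nested $\trigger$ calls it makes insert exactly $a.\descnext{} = [\descnext{}[k]]_k$ into $\ttree$ as a contiguous block all tagged with one $\tnonce{}$ — giving $a.\desc{} \subseteq^* \atree$ and $a.\descnext{} \subseteq^* \ttree$; and \textbf{($\Leftarrow$)} conversely, from $a.\desc{} \subseteq^* \atree$ and $a.\descnext{} \subseteq^* \ttree$ we reconstruct that the corresponding $\action$ call was made with argument list $a.\desc{}$, and since $a.\desc{}$ uniquely determines $a.\code{}$ and $a.\descnext{}$ (the EVM-rules correspondence stated in \cref{sec:dag-crt-model}), this is precisely the action $a$ being in $\batch{}$.

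First I would set up the forward direction. Assume $a \in \batch{}$, so $a$ is executed as part of $\batch{}$; since $a$ is non-local it is action-wrapped, i.e.\ $\act{}(a.\desc{})$ is what actually runs. Walking through $\action$ in Algorithm \ref{alg:gsc-mult-triggers}: it increments $\anonce{}$ \emph{once} (line with $\anonce{} \gets \anonce{}+1$ before the loop), then for each $\mathsf{trigData}$ in the list it calls the sub-action and inserts $\hash(\sender,\addr{},\cdata,\anonce{},\sid{})$ into the tree, always using the \emph{same} $\anonce{}$; since the list passed in is exactly $[\langle sa_j\rangle]_j$ with $sa_j.\desc{} = a.\desc{}[j]$ (by the Executor spec, Algorithm \ref{alg:shared-executor-xevm-dag}, which bundles all triggered actions of one parent into one $\action$ call), these insertions form a contiguous block equal to $a.\desc{}$ all sharing one $\anonce{}$, i.e.\ $a.\desc{} \subseteq^* \atree$. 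Dually, the raw code $a.\code{}$ executing makes some number of nested $\trigger(\cdot)$ calls; by the $\trigger$ logic, the first one increments $\tnonce{}$ and subsequent ones reuse it (the $\Not\ \tcalled{}$ guard on the $\tnonce{}$ increment), and each inserts the triggered description into $\ttree$, yielding exactly $a.\descnext{}$ as a contiguous block under one $\tnonce{}$, i.e.\ $a.\descnext{} \subseteq^* \ttree$. The edge cases ($a$ not triggering so $|a.\descnext{}|=0$, or $a$ the source action so it is not action-wrapped) are handled exactly as in the footnote: an empty array is vacuously a contiguous subsequence.

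Then I would do the converse by ``reading off'' the execution from the trees. Suppose $a.\desc{} \subseteq^* \atree$ and $a.\descnext{} \subseteq^* \ttree$. The $\subseteq^*$ condition on $\atree$ says there is a maximal contiguous run of entries in $\atree$ sharing one $\anonce{}$ value and that run equals $a.\desc{}$; since every insertion into $\atree$ happens inside the single loop of one $\action$ call and all entries from one $\action$ call share that call's $\anonce{}$ (and distinct $\action$ calls get distinct $\anonce{}$ values because $\anonce{}$ is incremented once per call), this run corresponds to exactly one $\action$ invocation whose argument array had descriptions $a.\desc{}$. Invoking the EVM-determinacy assumption from \cref{sec:dag-crt-model} — $a.\desc{}$ uniquely determines $a.\code{}$ and $a.\descnext{}$ — that invocation ran $a.\code{}$ and its nested triggers produced precisely $a.\descnext{}$ in $\ttree$ (consistent with the given $a.\descnext{} \subseteq^* \ttree$); hence the triple $(a.\desc{}, a.\code{}, a.\descnext{})$, which is $a$, was executed as part of $\batch{}$, so $a \in \batch{}$.

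The main obstacle I expect is the bookkeeping around the $\subseteq^*$ (same-nonce) tagging: I need the fact that the nonce value acts as a faithful ``group identifier'' partitioning the tree entries into per-$\action$-call (resp.\ per-transaction) blocks, with \emph{no two different calls sharing a nonce} and \emph{no call splitting across nonces}. This requires a short supporting argument that $\anonce{}$ is incremented exactly once at the start of each $\action$ call (so it strictly increases across calls and is constant within a call) and, symmetrically, that within one parent transaction all $\trigger$ calls reuse a single $\tnonce{}$ — both of which follow by inspection of Algorithm \ref{alg:gsc-mult-triggers}, but I would state them explicitly as a sub-claim since the whole reconstruction argument in the converse direction hinges on it. Everything else is a routine unrolling of the loop in $\action(\cdot)$ and is essentially identical in spirit to the one-line proof of Claim \ref{claim:a-exec-iff-desc-inserted-chain}.
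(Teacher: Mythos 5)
Your proposal is correct and in substance takes the same approach as the paper: both hinge on the bookkeeping you flag as the needed sub-claim, namely that $\anonce{}$ (resp.\ $\tnonce{}$) is incremented exactly once per $\action$ call (resp.\ once per parent transaction) and therefore acts as a faithful group identifier, so a shared-nonce contiguous run that is exactly $a.\desc{}$ can only have come from a single $\action$ invocation with exactly that argument array, after which EVM determinacy of $\desc{}$ yields $a$ itself. The only cosmetic difference is that the paper proves the ``if'' direction by contrapositive, enumerating the two ways your reconstruction could fail (the subactions of $a$ split across distinct parent actions, hence different $\anonce{}$ values, or embedded in a strictly larger parent, hence the nonce is shared with extra entries), whereas you argue the reconstruction directly; the two framings are equivalent.
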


\begin{proof}[Proof of \cref{claim:a-exec-iff-desc-inserted-dag}]
     The ``only if'' direction is simple and omitted. We show the contrapositive of the ``if'' direction, i.e., we begin by assuming $a \notin \batch{}$; also assume $a.\desc{} \subseteq \atree$ (if not, we are done). Since $a \notin \batch{}$ \emph{as is}, there are two cases under which the subactions of $a$ still appear as a contiguous subsequence under $\atree$. (1) The subactions of $a$ are partitioned and each partition is part of a distinct parent action in the batch; in this case, by construction of \cref{alg:gsc-mult-triggers}, there exist two subactions belonging to different parent actions, which are inserted in $\atree$ with different $\anonce{}$ values. Thus $a.\desc{} \not \subseteq^* \atree$. (2) All subactions of $a$ are part of the same parent action in the batch (and thus share the same $\anonce{}$), but that parent action additionally includes at least one other subaction (preceding or following the subactions of $a$); in this case, the subactions of $a$ are not the \emph{only} ones to use this $\anonce{}$ value, and we again have $a.\desc{} \not \subseteq^* \atree$.
\end{proof}

\begin{lemma}~\label{lem:dag-gsc-all-or-none-exec}
    Let $\dagcrt = [a_i]_{i \in [n]}$ and $\{\rol{b}, \dst{b}, \dstt{b}, \batch{b}\}_{b \in \{1,2\}}$ be as in \cref{def:forward-atom-chain-crt}, when using the DAG-CRT $\gsc$ contract for off-chain execution. For $b \in \{1,2\}$, let $\{\ttree_b, \atree_b \}$ be the $\gsc_b$ trees under $\dst{b}$.
    Also assume 
    
    \noindent(I) $\ttree_1 = \atree_2$ and $\ttree_2 = \atree_1$.

    Then (I) is preserved under $(\dstt{1}, \dstt{2})$ if and only if there exists $d \in \{\true, \false\}$ such that $\actexec{}(\dst{b}, \dstt{b}, \batch{b}, a_i) = d$ for $b \in \{1,2\}$ where $a_i.\rol{} = \rol{b}$.
\end{lemma}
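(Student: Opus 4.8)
The plan is to follow the proof of \cref{lem:svs-gsc-all-or-none-exec} almost verbatim, substituting \cref{claim:a-exec-iff-desc-inserted-dag} for \cref{claim:a-exec-iff-desc-inserted-chain} and working with \emph{arrays} of sub-action descriptions rather than single descriptions. The ``if'' direction again corresponds to the honest execution of $\dagcrt$ (a complete execution when $d = \true$, the empty one when $d = \false$), and its proof is routine, so I would omit it. For the ``only if'' direction I would prove the contrapositive: assume the values $d_i := \actexec{}(\dst{b}, \dstt{b}, \batch{b}, a_i)$ (taken on the rollup $b$ with $a_i.\rol{} = \rol{b}$) are \emph{not} all equal, and show that (I) fails under $(\dstt{1}, \dstt{2})$. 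Because $|\mathcal{R}| = 2$ and every triggered sub-action runs on the rollup other than its parent's, the actions $a_1, \dots, a_n$ of $\dagcrt$ alternate between $\rol{1}$ and $\rol{2}$; since the Boolean sequence $(d_i)_{i \in [n]}$ is non-constant, there is an index $i \in [n-1]$ with $d_i \neq d_{i+1}$, and WLOG $a_i.\rol{} = \rol{1}$ and $a_{i+1}.\rol{} = \rol{2}$.

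Next I would use the defining property of the DAG-CRT sequence, $a_i.\descnext{} = a_{i+1}.\desc{}$ as arrays, together with \cref{claim:a-exec-iff-desc-inserted-dag}, read as the equivalence $a \in \batch{} \Leftrightarrow a.\desc{} \subseteq^* \atree \Leftrightarrow a.\descnext{} \subseteq^* \ttree$ (with the $\desc{}$-side, resp. $\descnext{}$-side, taken vacuously for the non-action-wrapped action $a_1$, resp. the non-triggering action $a_n$, per that claim's edge-case remark). If $(d_i, d_{i+1}) = (\true, \false)$, then $a_i \in \batch{1}$ gives $a_i.\descnext{} \subseteq^* \ttree_1'$, while $a_{i+1} \notin \batch{2}$ gives $a_{i+1}.\desc{} \not\subseteq^* \atree_2'$; since the two arrays coincide, $\ttree_1'$ admits a ``$\subseteq^*$'' witness that $\atree_2'$ does not, hence $\ttree_1' \neq \atree_2'$. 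The case $(d_i, d_{i+1}) = (\false, \true)$ is symmetric, yielding $a_i.\descnext{} \not\subseteq^* \ttree_1'$ and $a_{i+1}.\desc{} = a_i.\descnext{} \subseteq^* \atree_2'$, hence again $\ttree_1' \neq \atree_2'$. In both cases (I) is violated, completing the contrapositive.

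The main obstacle I anticipate is the array/nonce bookkeeping hidden inside the ``$\subseteq^*$'' relation: unlike the chain-CRT case, where membership of a single description suffices, here one must argue that if the two trees -- viewed as the ordered lists of inserted hashes, each hash carrying its $\anonce{}$/$\tnonce{}$ -- are equal (up to negligible probability, by collision resistance of $\hash$), then a ``$\subseteq^*$'' witness for $a_i.\descnext{}$ (a contiguous block that is exactly the set of hashes sharing one $\tnonce{}$) transfers to the other tree as a block sharing one $\anonce{}$. This is precisely what \cref{claim:a-exec-iff-desc-inserted-dag} packages -- in particular, the fact (from \cref{alg:gsc-mult-triggers}) that a single parent $\action(\cdot)$ call stamps all of its sub-actions with one $\anonce{}$ and all of its nested $\trigger$ calls with one $\tnonce{}$, so a malicious Executor cannot ``recreate'' $a_i.\descnext{}$ by splitting it across several $\action(\cdot)$ calls or padding one call with extra sub-actions. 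Once that claim is in hand the lemma follows with the same two-case argument as \cref{lem:svs-gsc-all-or-none-exec}; the only additional care needed is the vacuous treatment of the boundary actions $a_1$ and $a_n$, handled exactly as indicated in the statement of \cref{claim:a-exec-iff-desc-inserted-dag}.
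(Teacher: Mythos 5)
Your proposal is correct and follows essentially the same route as the paper's proof: contrapositive on an adjacent pair with $d_i \neq d_{i+1}$, then \cref{claim:a-exec-iff-desc-inserted-dag} (in its strengthened, per-conjunct form, which the paper also uses) to get a $\subseteq^*$ witness present in one tree and absent from the other. The only cosmetic difference is that where you argue abstractly that the $\subseteq^*$ property is content-determined and therefore transfers between equal trees, the paper spells this out via three explicit sub-cases (not contiguous; contiguous but split across nonces; contiguous but padded with an extra sub-action sharing the nonce) -- the substance is the same.
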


\begin{proof}[Proof of \cref{lem:dag-gsc-all-or-none-exec}]
    The ``if'' direction corresponds to the honest execution; its proof is simple and thus omitted. We prove the contrapositive of the ``only if'' direction. Assume there exists $1 \leq i \leq n-1$ and such that (without loss of generality)  $a_{i}'.\rol{} = \rol{1}$ and $a_{i + 1}'.\rol{} = \rol{2}$, and $\actexec{}(\dst{1}, \dstt{1}, \batch{1}, a_{i}') = \true  $ while $\actexec{}(\dst{2}, \dstt{2}, \batch{2}, a_{i+1}') = \false$. Since $\dagcrt$ is a DAG CRT, we know $k_i^2 = |a_i'.\descnext{} | =|a_{i+1}'.\desc{}| = k_{i+1}^1 > 0$.
    Using \cref{claim:a-exec-iff-desc-inserted-dag}, the former implies $a_{i}'.\descnext{} \subseteq^* \ttree_{1}'$ ; the latter implies $a_{i+1}'.\desc{} \not \subseteq^* \atree_{2}'$, which we break down to one of these three disjoint cases:
    \begin{itemize}
        \item Case 1: $a_{i+1}'.\desc{} \not \subseteq \atree_{2}'$. This gives us directly that $\ttree_{1}' \neq \atree_{2}'$.
        
        \item Case 2: $a_{i+1}'.\desc{} \subseteq \atree_{2}'$ and  $\exists j$ such that $a_{i+1}.\desc{}[j] \in \atree_{2}'$ with $\anonce{1}$ and $a_{i+1}.\desc{}[j+1] \in \atree_{2}'$ with $\anonce{2} \neq \anonce{1}$. Since all $a_i'.\descnext{}[j] \in \ttree_{1}'$ are inserted in $\ttree_{1}'$ with the same $\tnonce{}$, then without loss of generality $\anonce{2} \neq \tnonce{}$ and thus $\ttree_{1}' \neq \atree_{2}'$.
        
        \item Case 3: $a_{i+1}'.\desc{} \subseteq \atree_{2}'$, all $a.\desc{}[j] \in \atree_{2}'$ with the same $\anonce{}$, and there exists another subaction $sa^*.\desc{} \notin a_{i+1}'.\desc{}$ such that $sa^*.\desc{} \in \atree_{2}'$ also with $\anonce{}$. Since $a_{i}'.\descnext{} \subseteq^* \ttree_{1}'$, there cannot be any $\descnext{}$ that matches $sa^*.\desc{}$ with the same $\tnonce{} = \anonce{}$ as the subactions in $a_{i}'.\descnext{}$. Thus $\ttree_{1}' \neq \atree_{2}'$.
    \end{itemize}
    
\end{proof}

We also repurpose \cref{lem:svs-gsc-honest-or-reverse-order} and \cref{lem:chain-gsc-honest-order} and their proofs to handle DAG CRTs, but omit stating them. The only difference is that the $\desc{}$ and $\descnext{}$ fields are arrays (instead of elements) and as such they can be empty (instead of null) and form contiguous subsequences of the trigger/action trees (instead of being a member thereof).

Finally, the proof of \cref{thm:dag-crate-forward-atomicity-two-rollups} is similar to the proof of \cref{thm:chain-crate-forward-atomicity-two-rollups}; the only difference is that we rely on DAG-CRT-repurposed versions of \cref{lem:svs-gsc-honest-or-reverse-order} and \cref{lem:chain-gsc-honest-order}.

\newpage
\section{Application: Cross-Rollup Flash Loan}~\label{apdx:xfl}

\begin{lstlisting}[language=Solidity][h]
contract XToken is IERC20 {

    uint256 public totalSupply;
    mapping(address => uint256) public balanceOf;
    GeneralSystemContract public gsc;
    address public oTokenAddr;

    function setup(address _oTokenAddr) public {
        oTokenAddr = _oTokenAddr;
    }

    modifier onlyGSC() {
        require(msg.sender == address(gsc), "Token: only GSC");
        _;
    }

    function transfer(address recipient, uint256 amount)
        external returns (bool)
    {
        balanceOf[msg.sender] -= amount;
        balanceOf[recipient] += amount;
        return true;
    }

    function mint(address to, uint256 amount) onlyGSC external {
        require(gsc.xSender() == oTokenAddr, "Token: only oToken can trigger mint"); // new addition
        _mint(to, amount);
    }

    function burn(address to, uint256 amount) external {
        address from = msg.sender;
        _burn(from, amount);
        bytes memory cdMint = abi.encodeWithSignature(
            "mint(address,uint256)", 
            to, amount
        );
        gsc.trigger(oTokenAddr, cdMint, 10_000);
    }

    function _mint(address to, uint256 amount) internal {
        balanceOf[to] += amount;
        totalSupply += amount;
    }

    function _burn(address from, uint256 amount) internal {
        balanceOf[from] -= amount;
        totalSupply -= amount;
    }
}
\end{lstlisting}

\newpage

\begin{lstlisting}[language=Solidity]
contract XFlashLoan {

    GeneralSystemContract public gsc;
    bool public isBusy;
    Token public token;
    address public otherFLAddr;

    modifier onlyGSC() {
        require(msg.sender == address(gsc), "FlashLoan: only GSC");
        _;
    }

    function setup(address _otherFLAddr) public {
        otherFLAddr = _otherFLAddr;
    }

    // on rollup 1
    function init(uint256 amount, address target, bytes memory cdTarget) public {
        address borrower = msg.sender;
        require (!isBusy);
        isBusy = true;
        uint256 balBefore = token.balanceOf(address(this));
        token.transfer(borrower, amount);

        (bool status, ) = target.call(cdTarget);
        require(status == true);

        bytes memory cdReact = abi.encodeWithSignature(
            "react(address,address,uint256)", 
            address(this), borrower, balBefore
        );
        gsc.trigger(otherFLAddr, cdReact, 10_000);
    }

    // on rollup 2
    function react(address oFLAddr, address borrower, uint256 balBefore) onlyGSC public {
        require(gsc.xSender() == otherFLAddr);
        require (!isBusy);
        bytes memory cdComplete = abi.encodeWithSignature(
            "complete(address,uint256)", 
            borrower, balBefore
        );
        gsc.trigger(oFLAddr, cdComplete, 10_000);
    }

    // on rollup 1
    function complete(address borrower, uint256 balBefore) onlyGSC public {
        require(gsc.xSender() == otherFLAddr);
        require (isBusy);
        uint256 balAfter = token.balanceOf(address(this));
        require(balAfter >= balBefore);
        isBusy = false;
    }
}
\end{lstlisting}

\newpage
\begin{lstlisting}[language=Solidity]

contract XUserFL {

    GeneralSystemContract public gsc;
    Token public token;
    FlashLoan public fl;
    address public othUserFLAddr;

    constructor(address gscAddr){ 
        gsc = GeneralSystemContract(gscAddr);
    }

    function setup(address tokenAddr, address flAddr, address _othUserFLAddr) public {
        token = Token(tokenAddr);
        fl = FlashLoan(flAddr);
        othUserFLAddr = _othUserFLAddr;
    }

    function simpleXFL(uint256 amount, address arbAddr, bytes memory cdArb) public {
        bytes memory cdStep1 = abi.encodeWithSignature(
            "step1(uint256,address,bytes)", 
            amount, arbAddr, cdArb
        );
        fl.init(amount, address(this), cdStep1);
    }

    // on rollup 1
    function step1(uint256 amount, address arbAddr, bytes memory cdArb) public {
        token.burn(othUserFLAddr, amount); 
        bytes memory cdStep2 = abi.encodeWithSignature(
            "step2(uint256,address,bytes)", 
                amount, arbAddr, cdArb
        );        
        gsc.trigger(othUserFLAddr, cdStep2);
    }

    // on rollup 2
    function step2(uint256 amount, address arbAddr, bytes memory cdArb) public {
        require(gsc.xSender() == othUserFLAddr);
        (bool status, bytes memory data) = arbAddr.call(cdArb);
        require(status == true);
        token.burn(othUserFLAddr, amount);
        bytes memory cdStep3 = abi.encodeWithSignature(
            "step3(uint256)", 
                amount
        );
        gsc.trigger(othUserFLAddr, cdStep3);
    }

    // on rollup 1
    function step3(uint256 amount) public {
        require(gsc.xSender() == othUserFLAddr);
        token.transfer(address(fl), amount);
    }

}

\end{lstlisting}

\end{document}